\newcommand{\AM}{\textcolor{black}}
\newcommand{\MC}{\textcolor{black}}
\newtheorem{lemma}{Lemma}[section]
\newtheorem{proposition}{Proposition}[section]
\newtheorem{remark}{Remark}[section]
\newtheorem{example}{Example}[section]
\def\w{\omega}
\title{\AM{A reduction scheme for} coupled Brownian harmonic oscillators}
\author[1]{Matteo Colangeli\thanks{matteo.colangeli1@univaq.it}}
\author[2]{Manh Hong Duong\thanks{h.duong@bham.ac.uk}}
\author[3]{Adrian Muntean\thanks{adrian.muntean@kau.se}}
\affil[1]{Department of Information Engineering, Computer Science and Mathematics,
University of L'Aquila, Italy.}
\affil[2]{School of Mathematics,
University of Birmingham,
UK.}
\affil[3]{Department of Mathematics and Computer Science \& Centre for Societal Risk Research (CSR), Karlstad University, Sweden.}
\begin{document}

\maketitle
\begin{abstract}
\AM{We propose a} reduction scheme for a system \MC{constituted by} two coupled \AM{harmonically-bound} Brownian oscillators.
\MC{We reduce the description by constructing a lower dimensional model which inherits some of the basic features of the original dynamics and is written in terms of suitable transport coefficients.} 
\MC{The proposed procedure is twofold}: while the deterministic component of the dynamics is obtained \AM{by a direct application of} the invariant manifold method, the diffusion terms are determined via the Fluctuation-Dissipation Theorem.
\AM{We highlight} the behavior of the coefficients 
up to a critical value of the coupling parameter, which marks the endpoint of the interval in which a contracted description is available.
The study of the weak coupling regime is addressed and the commutativity of alternative reduction paths is also discussed. 
\end{abstract}

\section{Introduction}
Complex systems in nature and in applications (such as molecular systems, crowd dynamics, swarming, opinion formation, just to name a few) are often described by systems of coupled stochastic differential equations modeling the time evolution of a large number of microscopic constituents.
While a fundamental approach to the analysis of a many-particle system would require solving the entire set of coupled Newton equations equipped with the appropriate initial conditions, a stochastic approach is conveniently exploited when the number of degrees of freedom becomes large. This way, only the dynamics of few selected constituents is addressed, while the details of the microscopic interactions with the unresolved particles are blurred and encoded in a noise term with prescribed statistical properties.
\MC{Such stochastic models are thus expected to convey a faithful description of the underlying physical phenomena.} 
\MC{However, stochastic dynamics can still be hard to treat analytically, and even their numerical characterization may not be easily accessible, owing to the high dimensionality of the model (degrees of freedom, number of involved parameters, etc.).}
It is thus of \AM{great} importance to approximate such large and complex systems by simpler and lower dimensional ones, while still preserving the essential information \MC{from the original model}. Such research endeavour has gradually developed and systematized into the growing field of model reduction \MC{methods} \cite{gorban2006model,Ott05,Snowden,Rupe}. A natural reduction approach consists in keeping track of only a set of relevant, typically slow, variables (for instance, certain marginal coordinates), also referred to as \textit{collective variables} in the literature \cite{hartmann}.
By projecting the full \MC{Markovian} dynamics \MC{on the manifold parametrized by} the relevant variables, a non-Markovian dynamics is typically obtained \cite{Zwanzig}. 
A key step in the reduction procedure consists, then, in 
\MC{verifying the conditions which allow to restore the Markovian structure of the original process, so as to avoid the onset of memory terms in the dynamics.}
To this aim, many different approaches have been proposed in the literature. Most of the existing works consider systems in which a time-scale separation is explicitly invoked, see e.g. the monographs \cite{pavliotis2008multiscale, gorban2006model} for further information. In such systems characterized by distinguished time-scales, as the ratio of fast-to-slow characteristic time-scales increases, the fast dynamics equilibrates more and more rapidly with respect to the the slow ones. Consequently,  a fully \MC{Markovian} reduced description for the evolution of the slow variables can be derived \MC{in the limit of a perfect time-scale separation}, see e.g. \cite{Zwanzig,Ghil}. 

A classical example, which has been thoroughly studied in the literature, cf. e.g. \cite{Kramers1940,Nelson1967,Duong2017} and references therein, deals with the derivation of the first order overdamped Langevin dynamics from the second-order underdamped model in the zero-mass limit (or the high friction limit, also known as the Smoluchowski-Kramers approximation), which corresponds to sending the mass to zero (or the friction coefficient to infinity, respectively). A major limitation caused by the ansatz of perfect time-scale separation is that a large wealth of information is lost in the limiting process. Specifically, inertial effects are essentially disregarded in the Smoluchowski dynamics. It is thus desirable to develop \MC{techniques of model reduction  not relying on the ansatz of perfect time-scale separation}. Several efforts in this direction have been made in recent years, see e.g. \cite{TonyRoberts}. For instance, in \cite{Wouters2019,Wouters2019b} the authors outline a reduction procedure, based on the Edgeworth expansion, designed for both deterministic and stochastic systems featuring a moderate time-scale separation.
Moreover, the analysis of Ruelle–Pollicott resonances in a reduced state space in the presence of a weak time-scale separation has been recently discussed in \cite{Checkroun1,Checkroun2}.
In \cite{Lu2014} the authors obtain closed reduced models for the overdamped Langevin dynamics for a specific class of collective variables (reaction coordinates) that satisfy a suitable Poisson equation. The papers~\cite{Legoll2010,zhang2016effective,Duong2018, Legoll2019,Lelievre2019,Hartmann2020} deal with more general diffusion processes and collective variables using conditional expectations (see also \cite{SharmaHilder2022} for a similar work done in the context of Markov chains). Recently, in \cite{ColMun22}, the first and third authors of the present paper introduced a new approach to derive a closed reduced model for the underdamped Langevin dynamics based on the use of the dynamics invariance principle \cite{GorKar05} for the deterministic part and of the fluctuation-dissipation theorem for the noise term. The reduced dynamics not only provides a meaningful correction to the classical Smoluchowski equation but also preserves the response function of the original dynamics in the high-friction limit.

We also remark that reduction procedures for models related to ours have been frequently investigated in the literature. For instance, in \cite{TakashiUneyama2019} the authors derive a system of overdamped Langevin equations starting from the dynamics of two coupled Brownian oscillators, while in \cite{Soheilifard2011} the authors consider a set of coupled overdamped Langevin equations and derive, using an alternative method, a reduced description for just one of the two oscillators. 
Remarkably, our procedure allows us to encompass the various steps of reduction by just reiterating the same scheme: this clearly stands as a noteworthy feature of our approach. 

The rest of the paper is organized as follows. In Section \ref{sec:model} we present the model describing the dynamics of two coupled Brownian oscillators and we illustrate the alternative reduction paths that will be considered next.
In Sec. \ref{sec:under} we discuss a first application of the reduction scheme, leading to the dynamics of a single Brownian oscillator, in the presence of inertial terms. A second application comes in Sec. \ref{sec:IM2}, where we discuss the derivation of an overdamped dynamics for two coupled Brownian oscilators.
The derivation of a further reduced dynamics in terms of a single overdamped Brownian oscillator is then outlined in Sec. \ref{sec:commute}, where we also show that the different reduction paths considered in the previous Sections commute. Conclusions are finally drawn in Sec. \ref{sec: conclusion}. Detailed technical computations are deferred to the Appendix.

\section{The model}
\label{sec:model}

The aim of this paper is to extend the methodology developed in \cite{ColMun22} and make it applicable to a more complex setting involving systems of coupled stochastic differential equations. To fix the ideas, we consider a model describing isolated elongated polymers whose dynamics can be approximated fairly well by the time evolution of two beads of equal mass $m$, connected together via a spring. We denote the positions of the two beads by $x_1$ and $x_2$, respectively. Balancing the exerted forces, we deduce the following system of Langevin equations:
\begin{align}
\ddot{x}_1&=-\gamma_1 \dot{x}_1-\frac{1}{m}\nabla U_1(x_1)+\frac{1}{m}\nabla F(x_2-x_1)+\sigma_1\dot{W}_1\;,
\label{x1} \\
\AM{\ddot{x}_2}&=-\gamma_2 \dot{x}_2-\frac{1}{m}\nabla U_2(x_2)-\frac{1}{m}\nabla F(x_2-x_1)+\sigma_2\dot{W}_2 \label{x2} \;.
\end{align}
Here $W_i$, $i=1,2$, are standard independent  Brownian motions\footnote{To simplify the notation, in the sequel we denote by $\dot{W}$ the formal derivative of a Wiener process, corresponding to a white noise signal.}; $\gamma_i$ and $\sigma_i=\sqrt{2\gamma_i/\beta m}$ denote, respectively, the friction coefficients and the strength of the stochastic noise for the $i$-th bead \AM{($i\in\{1,2\}$)}, with $\beta$ the inverse temperature; $-\nabla U_i(x)$ is the external force acting on the $i$-th bead, while $\pm\nabla F(x_2-x_1)$ denote the interaction forces, which, according to Newton's third law (\AM{action-reaction principle}), have opposite signs and depend only on the magnitude of end-to-end vector $|z|=|x_2-x_1|$. We set $m,\gamma_i,\sigma_i\in (0,+\infty)$ for any $i\in\{1,2\}$. The above system was introduced in \cite{Degond2009} as a kinetic model for polymers endowed with inertial effects. 

Let $v_i=\dot{x}_i$ denote the velocity of the $i$-th bead. The system above can hence be rearranged as a first-order system:
\begin{align*}
\dot{x}_1&=v_1,
\\\dot{v}_1&=-\frac{1}{m}\nabla U_1(x_1)+\frac{1}{m}\nabla F(x_2-x_1)-\gamma_1 v_1+\sigma_1\dot{W}_1,
\\ \dot{x}_2&=v_2,
\\ \dot{v}_2&=-\frac{1}{m}\nabla U_2(x_2)-\frac{1}{m}\nabla F(x_2-x_1)-\gamma_2 v_2+\sigma_2\dot{W}_2.
\end{align*}
\AM{We restrict our attention exclusively to the one-dimensional case, however the working technique is not dependent on the space dimension. Furthermore, we employ in our description only Hookean springs, i.e. we involve  quadratic potentials of type}
\begin{equation}
U_i(x_i)=\frac{1}{2}\kappa_{i} x_i^2  \; ,
\end{equation}
with forcing terms originating from 
\begin{equation}
F(z)=\frac{1}{2}\xi z^2 \;.
\end{equation}
\AM{Here $\kappa_1,\kappa_2,\xi$ are given strictly positive  constants.} 

\AM{The original dynamics cf. } Eqs. \eqref{x1}-\eqref{x2} can thus be cast into the following system of \AM{linearly} coupled Langevin equations:

\begin{align}
\dot{x}_1&=v_1 \;, \nonumber\\
\dot{v}_1&=-\w_1^2 x_1+k(x_2-x_1)-\gamma_1 v_1+\sigma_1\dot{W}_1 \,, \label{original}\\ 
\dot{x}_2&=v_2 \;, \nonumber\\ \dot{v}_2&=-\w_2^2 x_2-k(x_2-x_1)-\gamma_2 v_2+\sigma_2\dot{W}_2 \;,\nonumber
\end{align}
where we introduced the frequencies $\omega_i=\sqrt{\kappa_i/m}$ and the coupling parameter $k=\xi/m$. This specific system has been used extensively in physics, biology and chemistry, for instance in the modelling of linear electrical networks \cite{Wang1945}, and to study the dynamics of atoms in protein molecules \cite{Berkowitz1981}.

Let us now illustrate two different reduction paths for the system of linearly coupled Langevin equations \eqref{original}, see the diagram in Fig. \ref{fig: diagram} for an illustration. 

\begin{figure}
\centering

\includegraphics[width=0.8\textwidth]{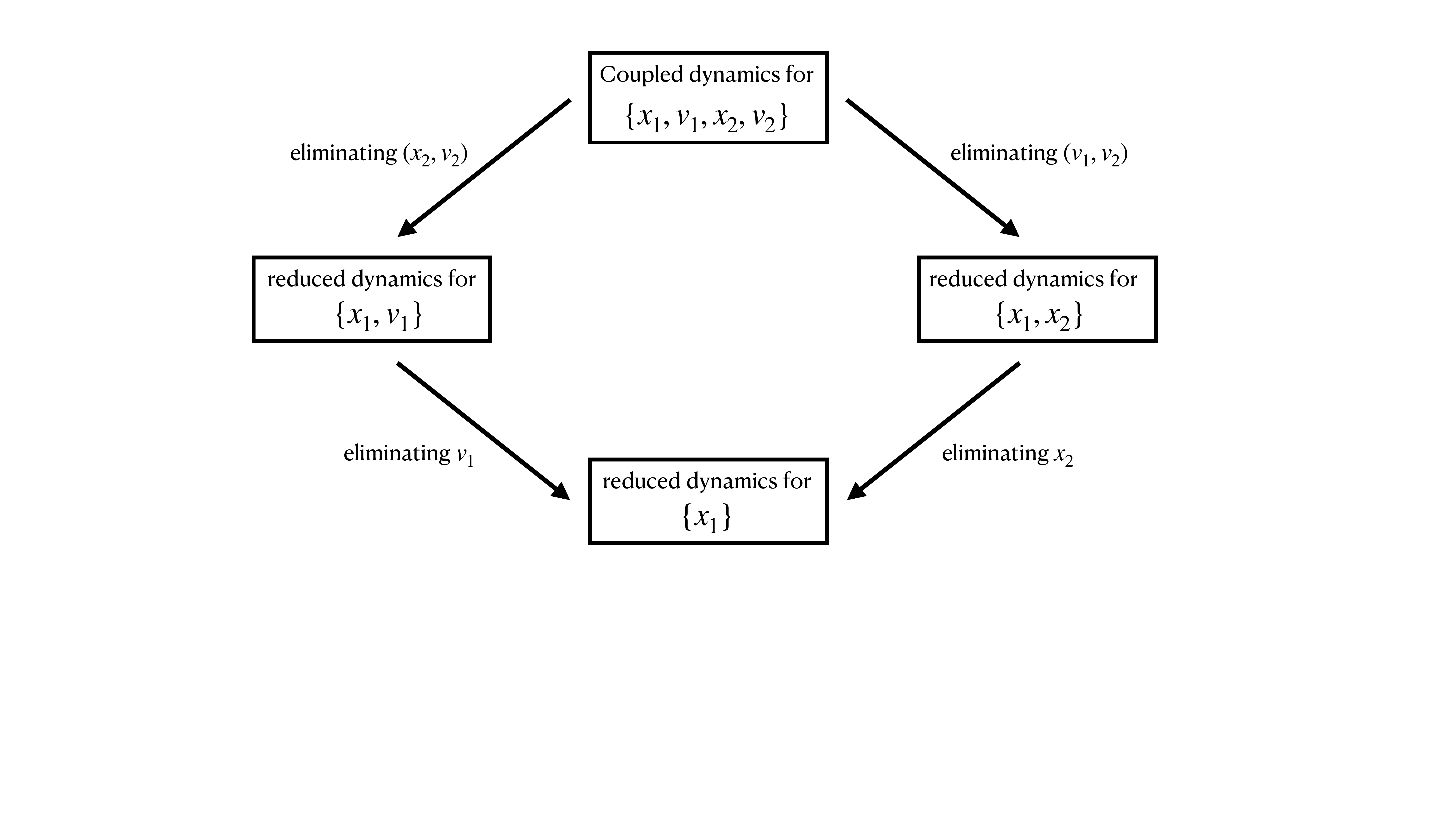}
\caption{The stairs of reduction for the coupled Langevin dynamics in Eq. \eqref{original}. Along the first path (top left and bottom left arrows), one first eliminates the variables $(x_2,v_2)$ of the second oscillator and then the velocity $v_1$ of the first oscillator. Instead, along the second route (top right and bottom right arrows) one first removes the two velocities $(v_1,v_2)$ and then, finally, the position $x_2$ of the second oscillator. The diagram is shown to be commutative: the two alternative paths lead to the same Langevin dynamics for the variable $x_1$.}
\label{fig: diagram}
\end{figure}

The first path consists in obtaining a contracted description expressed in terms of the variables $\{x_1,v_1\}$ only, thus formally erasing the variables describing the second oscillator, $(x_2,v_2)$ (see the top left arrow in Fig. \ref{fig: diagram}). In the second path, which is motivated from the classical Smoluchowski-Kramers approximation for a single Langevin dynamics, we instead eliminate the velocities $(v_1,v_2)$ to obtain a reduced model described by the only variables $\{x_1,x_2\}$ (top right arrow in Fig. \ref{fig: diagram}). Using the same approach, we may eventually contract the description even further, by removing the velocity $v_1$ in the first reduced model (bottom left arrow in Fig. \ref{fig: diagram}) or the position $x_2$ in the second one (bottom right arrow). We will then show that the both reduction paths lead to the same reduced dynamics for the remaining position variable $x_1$. Namely, the reduction diagram portrayed in Fig. \ref{fig: diagram} is commutative.

Along each step of the reduction path, we employ the invariant manifold method \cite{GorKar05} to derive lower dimensional models for the deterministic components of the original dynamics. The method was originally introduced as a special analytical perturbation technique in the KAM theory of integrable Hamiltonian systems \cite{Arn,Kol,Mos} and was later exploited in the kinetic theory of gases to derive the evolution equations of the hydrodynamic fields from the Boltzmann equation or related kinetic models \cite{GORBAN201848,GorKar05,GorKar}. Using the invariant manifold method, we succeed to derive a set of algebraic equations, called Invariance Equations, whose solutions fully characterize the coefficients of the reduced models. The same equations can also be solved in a perturbative fashion via the Chapman-Enskog method, which allows us to identify the structure of the so-called weak coupling regime. 
We then invoke the Fluctuation-Dissipation Theorem \cite{Kubo66,Kubo}, as well as the scale invariance of the stationary covariance matrix, to incorporate the noise terms in the reduced description. More details will come in the next Sections.




\section{Reduced dynamics for a single Brownian oscillator}
\label{sec:under}
\label{sec:IM}
Let us denote by $\langle \mathcal{O} \rangle$ the conditional average over noise of the observable $\mathcal{O}$, subject to a prescribed deterministic initial datum.
The original dynamics \eqref{original} can thus be formulated as a linear system of ODEs \AM{as follows}:
\begin{equation}
    \dot{\mathbf{z}}=\mathbf{Q} \mathbf{z}  \, , \label{original2}
\end{equation}
where $\mathbf{z}=(\langle x_1 \rangle,\langle v_1 \rangle,\langle x_2 \rangle,\langle v_2 \rangle)$, and
\begin{equation}
    \mathbf{Q}= \mathbf{Q}(k)=\begin{pmatrix}
0 & 1 & 0 & 0 \\
-\omega_1^2-k & -\gamma_1 & k & 0 \\
0 & 0 & 0 & 1 \\
k & 0 & -\omega_2^2-k & -\gamma_2 
\end{pmatrix} \, . \label{Q}
\end{equation}

The following result, whose proof is deferred to the appendix, shows the onset of a remarkable bifurcation phenomenon \AM{changing} the nature of the eigenvalues of $\mathbf{Q}(k)$ when varying the coupling strength $k$ beyond a critical value $k_c$. 
\begin{proposition} 
\label{prop: eigenvalues}
Suppose that $\gamma_i^2-4\omega_i^2>0$ for $i=1,2$. Then for sufficiently small $k$, $\mathbf{Q}(k)$ has four distinct real roots. For sufficiently large $k$, $\mathbf{Q}(k)$ has two distinct real roots and two complex conjugate non-real roots if $8(\omega_1^2+\omega_2^2) < (\gamma_1+\gamma_2)^2$, and has two pairs of non-real complex conjugate roots if $8(\omega_1^2+\omega_2^2) > (\gamma_1+\gamma_2)^2$. In addition, if $\gamma_1=\gamma_2=\gamma$ and $\omega_1=\omega_2=\omega$ (identical beads) then $\mathbf{Q}(k)$ has four distinct real roots for $k<k_c$, has 3 distinct real roots if $k=k_c$ and and has two real roots and a pair of complex conjugate roots for $k>k_c$, where the critical value $k_c$ is given by
$$
k_c=\frac{\gamma^2-4\omega^2}{8}\; .
$$ 
\end{proposition}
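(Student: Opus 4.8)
The plan is to pass to the characteristic polynomial of $\mathbf{Q}(k)$ and to follow its roots as $k$ increases. Reading off the characteristic polynomial (either by expanding $\det(\lambda I-\mathbf{Q}(k))$ or from the equivalent coupled second-order form $\ddot x_i=-(\omega_i^2+k)x_i-\gamma_i\dot x_i+kx_{3-i}$) one gets
$$p_k(\lambda):=\det(\lambda I-\mathbf{Q}(k))=\big(\lambda^2+\gamma_1\lambda+\omega_1^2+k\big)\big(\lambda^2+\gamma_2\lambda+\omega_2^2+k\big)-k^2=h_1(\lambda)h_2(\lambda)+k\big(h_1(\lambda)+h_2(\lambda)\big),$$
with $h_i(\lambda)=\lambda^2+\gamma_i\lambda+\omega_i^2$. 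Every coefficient of $p_k$ is strictly positive, so by Descartes' rule $p_k$ has no real root in $[0,\infty)$ and an even number, $0$, $2$ or $4$, of (negative) real roots; I will use this parity to confirm the counts. For small $k$ I would argue by continuity: at $k=0$, $p_0=h_1h_2$, and $\gamma_i^2-4\omega_i^2>0$ makes each factor split into two distinct negative real roots, so $p_0$ has four real roots, distinct for generic $(\omega_i,\gamma_i)$. Since non-real roots of a real polynomial come in conjugate pairs, a simple real root cannot leave the real axis under a small coefficient perturbation, so $p_k$ still has four distinct real roots for sufficiently small $k$. The only gap is a coincidence of roots at $k=0$, i.e. a $\rho$ with $h_1(\rho)=h_2(\rho)=0$; there one expands $p_k$ near $\rho$ as $c_1c_2(\lambda-\rho)^2+k(c_1+c_2)(\lambda-\rho)+\cdots$ with $c_i=h_i'(\rho)$, whose discriminant $\sim k^2(c_1+c_2)^2>0$ again forces the two nearby roots to be real (away from the degenerate locus $c_1+c_2=0$).

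The large-$k$ regime is the crux, and here I would split the four roots into two that stay bounded and two that escape. On bounded sets $p_k(\lambda)/k\to h_1(\lambda)+h_2(\lambda)=2\lambda^2+(\gamma_1+\gamma_2)\lambda+(\omega_1^2+\omega_2^2)$, so two roots of $p_k$ converge to the roots of this quadratic; substituting $\lambda=\sqrt{k}\,\mu$ gives $p_k(\sqrt{k}\,\mu)/k^2\to\mu^2(\mu^2+2)$, so the remaining two roots behave like $\pm i\sqrt{2k}$, a conjugate pair whose imaginary part blows up and which is therefore non-real for $k$ large. It remains to decide the nature of the two bounded roots, and here I combine continuity of roots with the conjugate-pair constraint. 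If $(\gamma_1+\gamma_2)^2>8(\omega_1^2+\omega_2^2)$ the limiting quadratic has two distinct real roots $r_-<r_+$, and since a conjugate pair $a\pm ib$ cannot converge to two distinct points, the two bounded roots of $p_k$ are genuinely real and distinct for large $k$; hence $p_k$ then has two real roots and one non-real conjugate pair. If $(\gamma_1+\gamma_2)^2<8(\omega_1^2+\omega_2^2)$ the limiting quadratic has a genuinely non-real pair of roots, the two bounded roots converge to them and are therefore non-real for large $k$, so $p_k$ has two non-real conjugate pairs. Both outcomes are consistent with the $\{0,2,4\}$ parity above.

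For identical beads the argument becomes elementary because $h_1=h_2=:h$ and $p_k$ factors exactly: $p_k=h(h+2k)=(\lambda^2+\gamma\lambda+\omega^2)(\lambda^2+\gamma\lambda+\omega^2+2k)$. The first factor has discriminant $\gamma^2-4\omega^2>0$, hence always two distinct negative real roots; the second has discriminant $\gamma^2-4\omega^2-8k$, positive for $k<k_c$, zero at $k=k_c=(\gamma^2-4\omega^2)/8$ (double root $-\gamma/2$), and negative for $k>k_c$. Finally the two factors share a root only when $\sqrt{\gamma^2-4\omega^2}=\sqrt{\gamma^2-4\omega^2-8k}$, i.e. only at $k=0$, and $-\gamma/2$ is not a root of the first factor (it yields $-2k_c\neq0$); so for $0<k<k_c$ all four real roots are distinct, at $k=k_c$ there are exactly three distinct real roots, and for $k>k_c$ two distinct real roots together with a non-real conjugate pair, which is the assertion.

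The computation of $p_k$, the small-$k$ continuity argument and the identical-bead case are essentially routine once the factorization $p_k=h_1h_2+k(h_1+h_2)$ is in hand; the genuine obstacle is the middle paragraph, namely upgrading the formal asymptotics $p_k/k\to h_1+h_2$ and $\lambda\sim\pm i\sqrt{2k}$ into rigorous statements about the \emph{exact} location and reality of all four roots at large but finite $k$. I would handle this through continuity of the roots of monic polynomials in their coefficients (Rouch\'e/Hurwitz), applied to the two rescaled polynomials $p_k(\lambda)/k$ and $p_k(\sqrt{k}\,\mu)/k^2$, together with the conjugate-pair dichotomy used above.
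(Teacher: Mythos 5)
Your proof is correct and reaches all the stated conclusions, but by a genuinely different route than the paper. Both arguments start from the same key identity $\mathcal{Q}_k=\mathcal{Q}_0+k(h_1+h_2)$ with $h_i(\lambda)=\lambda^2+\gamma_i\lambda+\omega_i^2$, and both treat the identical-bead case by the exact factorization $h\,(h+2k)$. The difference is in the machinery for general $k$: the paper classifies the roots via the classical quartic discriminant triple $(\Delta,P,D)$ and tracks the signs of $\Delta_k,P_k,D_k$ as polynomials in $k$ — in particular, the dichotomy $8(\omega_1^2+\omega_2^2)\lessgtr(\gamma_1+\gamma_2)^2$ enters there as the sign of the leading coefficient of $\Delta_k$ — whereas you track the roots themselves: continuity/perturbation for small $k$, and for large $k$ a two-scale splitting into two bounded roots converging to the roots of $h_1+h_2$ and two roots $\sim\pm i\sqrt{2k}$, so that the same dichotomy appears as the discriminant of the limiting quadratic $2\lambda^2+(\gamma_1+\gamma_2)\lambda+\omega_1^2+\omega_2^2$. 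The discriminant route is purely algebraic and yields exact information at every fixed $k$ (it is what the paper uses to locate $k_c$ numerically in its Example); your route is more elementary and more informative asymptotically, since it actually locates the roots at large $k$.

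One substantive point in your favor: you notice that the small-$k$ step requires $\mathcal{Q}_0=h_1h_2$ to have four \emph{distinct} roots, which fails exactly when $h_1$ and $h_2$ share a root $\rho$; the paper asserts distinctness of the roots of $\mathcal{Q}_0$ without comment, so its claim $\Delta_0>0$ silently excludes this locus (and the displayed inequalities $\Delta_0<0$, $\lim_{k\to0^+}\Delta_k<0$ there are sign typos for $>0$). Your local expansion handles the shared-root case when $c_1+c_2=h_1'(\rho)+h_2'(\rho)\neq 0$. On the residual locus $c_1+c_2=0$ (which is nonempty under the standing hypotheses, e.g.\ $\gamma_1=1.5$, $\gamma_2=2.5$, $\omega_1^2=0.5$, $\omega_2^2=1.5$, shared root $\rho=-1$) one checks $\mathcal{Q}_k(\rho)=\mathcal{Q}_k'(\rho)=0$ for every $k$, so $\rho$ is a persistent double root and the ``four distinct real roots'' conclusion genuinely fails there; this is a (measure-zero) defect of the proposition as stated rather than of your argument, and it affects the paper's proof equally.
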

In general, finding an explicit formula for $k_c$ is rather nontrivial, for this quantity may depend in a highly nonlinear fashion on the parameters of the model. For instance, the case illustrated in Example \ref{ex: phase transition} details a concrete illustration. \AM{In} Figure \ref{fig:fig1}\AM{, we  display} the behavior of the eigenvalues of $\mathbf{Q}$ as functions of $k$, for fixed values of the other parameters as \AM{chosen} in Example \ref{ex: phase transition}. As expected, the plot highlights the presence of a critical value $k_c$ of the coupling parameter, at which two of the eigenvalues merge: for $k>k_c$ two of the eigenvalues become complex-valued (see also Sec. \ref{ex: phase transition}).
Clearly, for $k=0$ the matrix $\mathbf{Q}$ can be partitioned into four $2\times 2$ blocks, viz.:
\begin{equation}
    \mathbf{Q}=\begin{pmatrix}
\mathbf{Q}_1 & \mathbf{0} \\
\mathbf{0} & \mathbf{Q}_2 
\end{pmatrix} \quad , \quad 
\mathbf{Q}_i=\begin{pmatrix}
0 & 1 \\
-\omega_i^2 & -\gamma_i 
\end{pmatrix} \, , \quad i=1,2 \; , \label{blocks}
\end{equation}
where the block $\mathbf{Q}_i$ refers to the $i$-th oscillator model.
Correspondingly, the eigenvalues of the block $\mathbf{Q}_i$ read:
\begin{equation}
    \lambda_{i}^{\pm}=-\frac{\gamma_i\pm\sqrt{\gamma_i^2-4 \omega_i^2}}{2} \; . \label{eigen0}
\end{equation}
The \AM{hypothesis} $\gamma_i^2-4 \omega_i^2>0$ \AM{of} Proposition \ref{prop: eigenvalues} guarantees that \AM{all} the eigenvalues of $\mathbf{Q}_1$ and $\mathbf{Q}_2$ are real. This is known as the \textit{overdamped regime} \cite{Risken}. Our \AM{subsequent} discussion will thus be restricted to the derivation of reduced descriptions of the original dynamics, given in Eq. \eqref{original}, in such overdamped regime.

\begin{figure}
     \centering
         \includegraphics[width=0.6\textwidth]{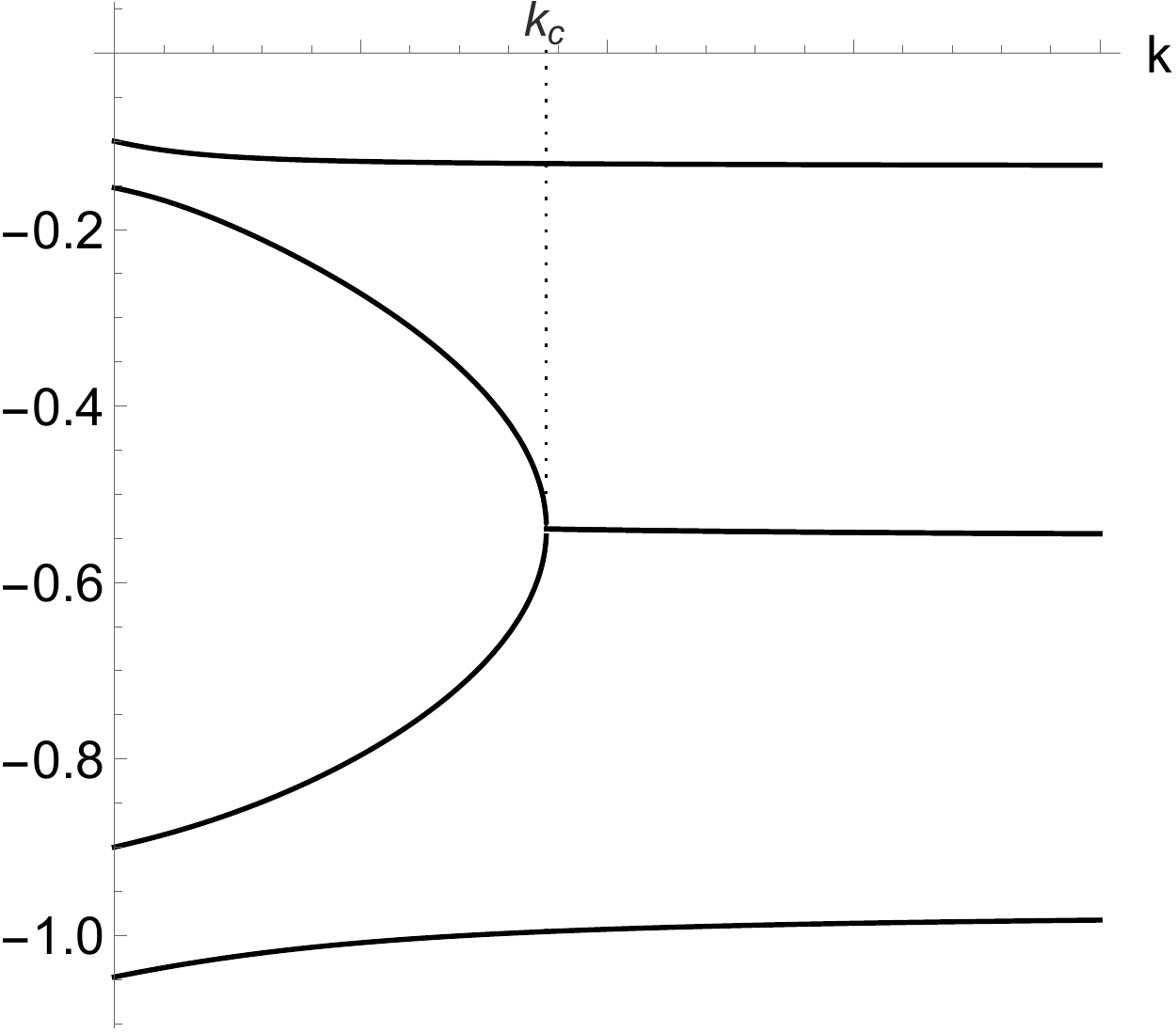}
        \caption{Behavior of the real part of the eigenvalues of $\mathbf{Q}(k)$ in Eq. \eqref{Q} as functions of $k$, with $\omega_1=0.3$, $\omega_2=0.4$, $\gamma_1=1.0$ and $\gamma_2=1.2$. The critical coupling parameter is $k_c\simeq 0.0876$.}
        \label{fig:fig1}
\end{figure}

Exploiting the linearity of the system \eqref{original}, we seek for a closure of the form:
\begin{subequations}
\label{closx2v2}
\begin{align}
   \langle x_2 \rangle &= a \langle x_1 \rangle+ b \langle v_1 \rangle  \label{clos1} \\
    \langle v_2 \rangle &= c \langle x_1 \rangle+ d \langle v_1 \rangle \; ,  \label{clos2}   
\end{align}
\end{subequations}
expressed in terms of the real-valued coefficients $a,b,c,d$.
Using the method of the invariant manifold, cf. Refs. \cite{colan07b,colan09,book_chapter,ColMun22,GorKar05,colan08}, one may compute the time derivative of the observable $\langle x_2 \rangle$, as expressed by the closure \eqref{clos1}, in two different ways.
On the one hand, starting from the definition of the original dynamics, we may write:
\begin{equation}
    \partial_t^{micro} \langle x_2 \rangle:= \langle v_2 \rangle= c \langle x_1 \rangle+ d \langle v_1 \rangle \label{micro1}
\end{equation}
where the second equality directly stems from the closure \eqref{clos2}.
On the other hand, we may apply the chain rule directly in Eq. \eqref{clos1}, which yields:
\begin{align}
    \partial_t^{macro} \langle x_2 \rangle&:= a \langle \dot{x}_1 \rangle+ b \langle \dot{v}_1 \rangle \nonumber\\
    &= \left(-\w_{0,1}^2~b + k~a~b   \right) \langle x_1 \rangle + \left(a - \gamma_1~b + k~b^2  \right) \langle v_1 \rangle \; . \label{macro1}
\end{align}
One may proceed analogously with the computation of the time derivative of $\langle v_2 \rangle$, thus obtaining:
\begin{align}
    \partial_t^{micro} \langle v_2 \rangle&:= 
    -\w_{0,2}^2 \langle x_2\rangle +k \langle x_1\rangle -\gamma_2 \langle v_2\rangle \nonumber\\
    &= \left(-\w_{0,2}^2~a + k - \gamma_2~c\right) \langle x_1\rangle+ \left(-\w_{0,2}^2~b-\gamma_2~d \right)\langle v_1\rangle \label{micro2}
\end{align}
and 
\begin{align}
    \partial_t^{macro} \langle v_2 \rangle&:= c \langle \dot{x}_1 \rangle+ d \langle \dot{v}_1 \rangle \nonumber\\
    &= \left(-\w_{0,1}^2~d + k~a~d   \right) \langle x_1 \rangle + \left(c - \gamma_1~d + k~b~d  \right) \langle v_1 \rangle \; . \label{macro2}
\end{align}
Note that Eqs. \eqref{macro1} and \eqref{macro2} yield the projection of the vector field describing the original dynamics of the variables $\langle x_2 \rangle$ and $\langle v_2\rangle$ onto the tangent space to the manifold parameterized by $\langle x_1 \rangle$ and $\langle v_1\rangle$.

The Dynamic Invariance principle \cite{GorKar05} states that the ``microscopic'' and ``macroscopic'' time derivatives of both $\langle x_2 \rangle$ and $\langle v_2 \rangle$ coincide, independently of the values of the observables $\langle x_1 \rangle$ and $\langle v_1 \rangle$. 
Hence, by \AM{equating} \eqref{micro1} with \eqref{macro1}, and \eqref{micro2} with \eqref{macro2}, one obtains a set of nonlinear algebraic equations known as \textit{Invariance Equations} for the unknown coefficients $a,b,c,d$, \AM{viz.}:
\begin{subequations}
\label{inveq}
\begin{align}
 c + \left(\w_{1}^2+k\right)b - k~a~b  &= 0 \; , \label{IE1}\\
 a - \gamma_1~b + k~b^2 -d &= 0 \; ,\label{IE2}\\
 \left(\w_{2}^2+k\right)a - k + \gamma_2~c -\left(\w_{1}^2+k\right)d + k~a~d &= 0 \; ,\label{IE3} \\
\left(\w_{2}^2+k\right)b+\gamma_2~d + c - \gamma_1~d + k~b~d &= 0 \; . \label{IE4}
\end{align}
\end{subequations}
The reduced deterministic dynamics can \AM{therefore} be cast in the form:
\begin{equation}
    \dot{\mathbf{u}}=\mathbf{M} \mathbf{u}  \, , \label{red}
\end{equation}
with $$\mathbf{u}=(\langle x_1 \rangle,\langle v_1 \rangle)$$ and
\begin{equation}
    \mathbf{M}=-\begin{pmatrix}
0 & -1 \\
\Omega_1^2(k) & \Gamma_1(k)
\end{pmatrix} \, . \label{M}
\end{equation}
Note that the structure of the matrix $\mathbf{M}$ is purposely reminiscent of the matrix $\mathbf{Q}_1$ in Eq. \eqref{blocks}. In fact, the aim of the reduction scheme is to rewrite the original coupled dynamics, as expressed by Eq. \eqref{original2}, in terms of the single-oscillator model equipped with appropriate \textit{generalized coefficients} \cite{Boon}, here denoted by $\Omega_{1}(k)$ and $\Gamma_{1}(k)$. Thus, the variables describing the second oscillator, albeit formally erased, are still properly encoded in the reduced description through the $k$-dependence of the generalized coefficients.
The latter are defined as:
\begin{subequations}
\label{coeff}
\begin{align}
 \Omega_{1}^2(k)&:= \omega_{1}^2+k(1 - a(k)) \; , \label{Om} \\
 \Gamma_{1}(k)&:= \gamma_{1} - k~b(k) \, ,\label{Ga} 
\end{align}
\end{subequations}
and depend on the coefficients $a, b$ solving the Invariance Equations \eqref{inveq}.


\subsection{The weak coupling regime}

The Chapman-Enskog (CE) method is a classical tool used in kinetic theory of gases to extract the slow hydrodynamic manifold from the Boltzmann equation; \AM{see, for instance, the works \cite{colan07b,GorKar05}}. When adapted to the present context, the CE method casts into a recurrence procedure yielding approximated solutions of the Invariance Equations \eqref{IE1}-\eqref{IE4}. 

The CE scheme is based on the expansion in powers of a small parameter (e.g. the Knudsen number in Boltzmann's theory) of each of the coefficients $a,b,c,d$, when the values of the other parameters of the model are kept fixed. The expansion parameter, here, is identified with the coupling parameter $k$. We thus write:

\begin{equation}
    a=\sum_{j=0}^\infty k^j a_j \quad , \quad b=\sum_{j=0}^\infty
    k^j b_j \quad , \quad c=\sum_{j=0}^\infty
    k^j c_j \quad , \quad
    d=\sum_{j=0}^\infty
    k^j d_j \, . \label{CE}
\end{equation} 

By inserting the expressions \eqref{CE} into the system \eqref{IE1}--\eqref{IE4}, one finds that the 
first relevant contribution comes from the first-order terms, and take the form:
\begin{subequations}
\label{wca}
\begin{align}
     a_1&=\frac{ \gamma_1^2 - \gamma_1 \gamma_2 - \omega_{1}^2 + \omega_{2}^2}{P} \; ,\label{1a} \\
     b_1&=\frac{ \gamma_1 - \gamma_2}{P} \; , \label{1b}\\
     c_1&=\frac{ (-\gamma_1 + \gamma_2)\omega_{1}^2}{P} \; , \label{1c}\\
     d_1&=\frac{- \omega_{1}^2 + \omega_{2}^2}{P} \; , \label{1d}
\end{align}
\end{subequations}
with
$$
P = \gamma_2^2 \omega_{1}^2 +  \gamma_1^2 \omega_{2}^2 + \left(\omega_{1}^2 - \
\omega_{2}^2\right)^2 -  \gamma_1  \gamma_2 \left(\omega_{1}^2 + \omega_{2}^2\right) \, .
$$
\AM{Inserting} the expressions \eqref{1a} and \eqref{1b} into \eqref{Om} and \eqref{Ga}, respectively, and by inserting the resulting formulae in \eqref{red}, \AM{we obtain}  a reduced description that corresponds to the \textit{weak coupling regime}. 
For $j\ge 1$, the coefficients $a_{j},b_{j},c_{j},d_{j}$ \AM{are} obtained \AM{in a direct manner by solving} following recurrence procedure:
\begin{subequations}
\begin{align}
    \omega_{1}^2 b_{j+1}+c_{j+1}&=-b_j+\sum_{\ell=0}^j a_{\ell} b_{j-\ell} \; , \label{CE1}\\
    a_{j+1}-\gamma_1 b_{j+1}&=-\sum_{\ell=0}^j b_{\ell} b_{j-\ell} \; ,\label{CE2}\\
    \omega_{2}^2 a_{j+1}+\gamma_2 c_{j+1}-\omega_{1}^2 d_{j+1} &=-a_j+d_j-\sum_{\ell=0}^j a_{\ell} d_{j-\ell} \; , \label{CE3}\\
    \omega_{2}^2 b_{j+1}+c_{j+1}+\left(\gamma_2-\gamma_1  \right)d_{j+1}&=-b_j-\sum_{\ell=0}^j b_{\ell} b_{j-\ell} \; ,\label{CE4}
\end{align}
\end{subequations}
The system \eqref{CE1}--\eqref{CE4} is \AM{endowed with} the initial conditions \eqref{1a}--\eqref{1d}.

\subsection{Exact solutions of the Invariance Equations}
\label{sec:exact}

Let 
\begin{equation}
    \xi^{\pm}(k):=-\frac{\Gamma_1(k)\pm\sqrt{\Gamma_1^2(k)^2-4 \Omega_1^2(k)}}{2}
    \label{eigenval}
\end{equation}
denote the two eigenvalues of the matrix $\mathbf{M}\in\mathbb{R}^{2\times 2}$ in \eqref{red}.
Among the many sets of solutions $\{a(k),b(k),c(k),d(k)\}$ of the Invariance Equations \eqref{inveq}, the relevant ones are continuous functions \AM{satisfying the asymptotic behavior} 
\begin{equation}
\lim_{k\rightarrow 0}\xi^{+}(k)=\lambda_1^{+}\qquad \text{and} \qquad \lim_{k\rightarrow 0}\xi^{-}(k)=\lambda_{1}^{-} \;,\label{req}  
\end{equation}
where $\lambda_1^{\pm}$ are the eigenvalues defined in Eq. \eqref{eigen0}.

\begin{figure}[ht]
     \centering
       \includegraphics[width=0.31\textwidth]{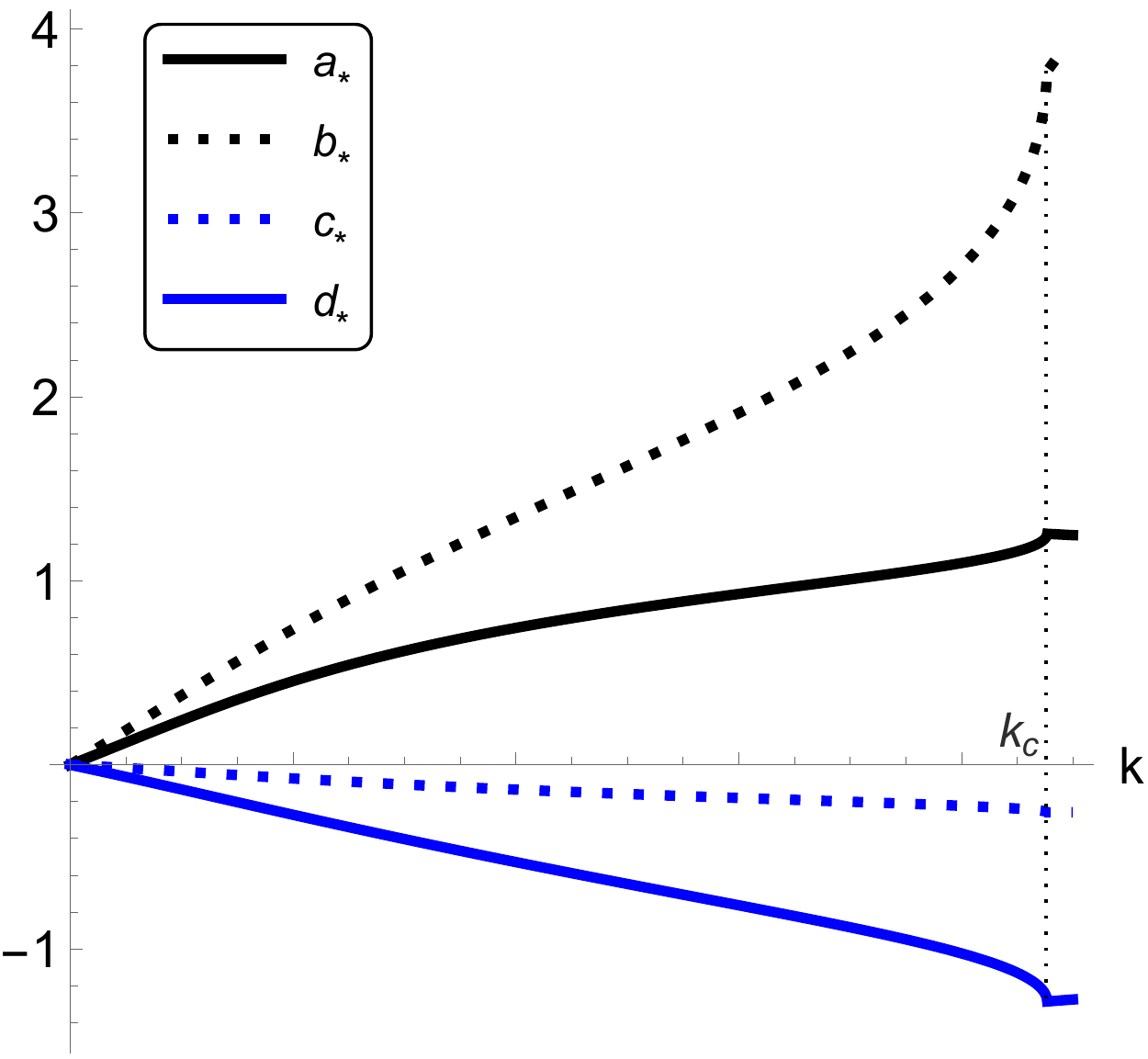}
       \hskip 5pt
       \includegraphics[width=0.3\textwidth]{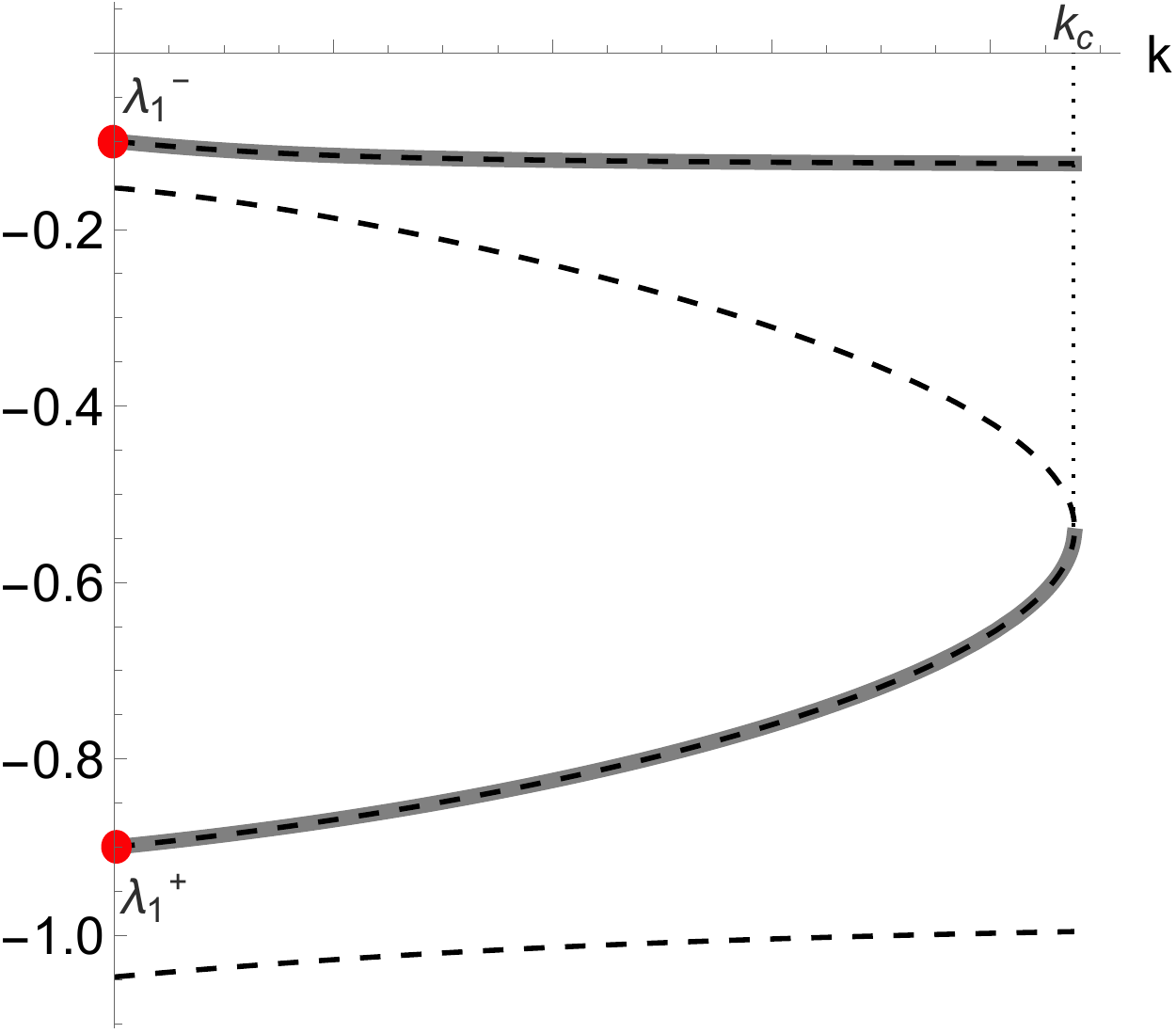}    
       \hskip 5pt
       \includegraphics[width=0.3\textwidth]{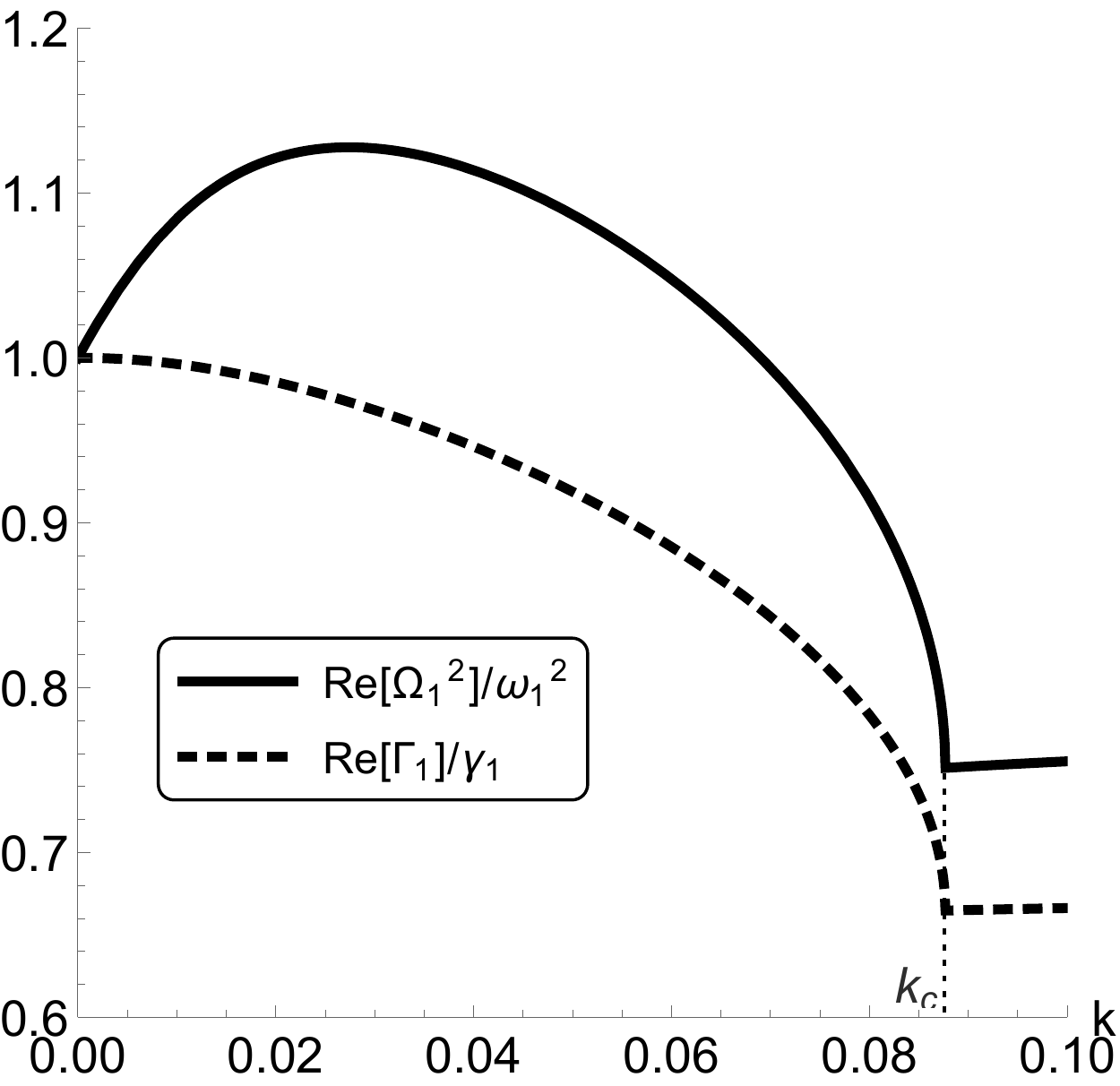}         
        \caption{\textit{Left panel}: Behavior of the real part of the coefficients $\{a_*,b_*,c_*,d_*\}$ solving the system \eqref{IE1}--\eqref{IE4} and verifying \eqref{req}, as functions of $k$. \textit{Central panel}: Behavior of the real part of the eigenvalues $\xi^{\pm}(k)$ (thick gray lines) and the eigenvalues of the original matrix $\mathbf{Q}(k)$ (dashed black lines), as functions of $k$. The red disks on the vertical axis indicate the values of $\lambda_{1}^{\pm}$ defined in \eqref{eigen0}. \textit{Right panel}: Behavior of the real part of $\Omega_1^2$ and $\Gamma_1$, as functions of $k$. In all panels the parameters are set to the same values as in Fig. \ref{fig:fig1}.}
        \label{fig:fig2}
\end{figure}

Our numerical investigation reveals that, for an arbitrary choice of the parameters, the solution with the prescribed asymptotics is unique and is represented by a set $\mathcal{S}_*=\{a_*,b_*,c_*,d_*\}$ shown in the left panel of Fig. \ref{fig:fig2}.
The central panel of Fig. \ref{fig:fig2} provides numerical evidence that the eigenvalues $\xi^{\pm}(k)$ not only reduce to $\lambda_1^{\pm}$ as $k \rightarrow 0$, as dictated by the relation \eqref{req}, but also fully reconstruct, over the interval $k\in[0,k_c]$, the behavior of the corresponding eigenvalues of $\mathbf{Q}(k)$ which attain the same asymptotics. This is clearly evinced, in the central panel, by noticing the perfect overlap between the graphs of the functions $\xi^{\pm}(k)$ and $\lambda_1^{\pm}(k)$, denoted by thick gray lines and dashed black lines, respectively. This is not a mere numerical coincidence: this result stems from evaluating the eigenvalues $\xi^{\pm}(k)$ through the set $\mathcal{S}_*$, which solves the Invariance Equations \eqref{inveq}, cf. e.g. \cite{GorKar05,Kar02}.
The right panel of Fig. \ref{fig:fig2} shows the behavior of the real parts of the generalized coefficients, evaluated with the set $\mathcal{S}_*$. In the interval $[0,k_c]$ the two coefficients $\Omega_1^2(k)$ and $\Gamma_1(k)$ are real-valued, whereas the function $\left(\Gamma_1^2(k)^2-4 \Omega_1^2(k)\right)$ is positive and bounded away from zero. Remarkably, the coefficient $\Omega_1^2$ displays a non-monotonic behavior as a function of the coupling parameter $k$.
We also observe that the definitions \eqref{Om} and \eqref{Ga} imply that the angular frequency and the friction coefficient of the single oscillator are properly recovered as $k\rightarrow 0$ when computing the generalized coefficients with the set $\mathcal{S}_*$, viz.:
\begin{equation}
\lim_{k \rightarrow 0} \Omega_1^2(k)\Big|_{\mathcal{S}_*}=\omega_1^2 \qquad \text{and} \qquad \lim_{k \rightarrow 0} \Gamma_1(k)\Big|_{\mathcal{S}_*}=\gamma_1  \, . \label{limits}
\end{equation}

The reduced dynamics \eqref{red} supplied with the closure $\mathcal{S}_*$ is thus called \textit{exact}.
It should also be noticed that the weak coupling approximation, represented by Eqs. \eqref{wca}, does not detect the critical value $k_c$, which also affects the original dynamics.
On the contrary, the exact reduced dynamics extends smoothly only up to $k_c$: beyond that value, the functions $\Omega_1^2(k)$ and $\Gamma_1(k)$ become complex-valued. The emergence of a bifurcation phenomenon in the considered model of coupled oscillators, signaled by the presence of a critical value of the coupling parameter, is reminiscent of the linearized Grad's moment system studied in Refs. \cite{colan07,colan07b}.

\subsection{The Fluctuation-Dissipation Theorem}
\label{sec:fdt}

We can now restore noise in the contracted description, under the assumption that the Markovian structure of the original dynamics is preserved.
To this aim, starting from Eq. \eqref{red} we construct a stochastic process for the variables $\{y_1,w_1\}$ (in place of $\{x_1,v_1\}$, respectively) which is cast in the standard Langevin-like form:
\begin{subequations}
\label{red2}
\begin{align}
 \dot{y}_1&=w_1+\sigma_{11}^{(r)}\;\dot{W}_1+\sigma_{12}^{(r)}\;\dot{W}_2 \, , \label{red2a} 
\\
\dot{w}_1&=-\Omega_{1}^2(k) y_1-\Gamma_1(k) w_1+\sigma_{21}^{(r)}\;\dot{W}_1+\sigma_{22}^{(r)}\;\dot{W}_2 \, , \label{red2b}   
\end{align}
\end{subequations}
where the coefficients $\Omega_1^2(k),\Gamma_1(k)$ are given in \eqref{Om} and \eqref{Ga}, and the $\sigma_{ij}^{(r)}$'s, $i,j\in\{1,2\}$, are the elements of the matrix $\boldsymbol{{\sigma}}^{(r)}$ yielding the strength of the noise in the reduced picture.
We then let \begin{equation}
\mathbf{\Sigma}=\frac{1}{2}\boldsymbol{{\sigma}}^{(r)} \left(\boldsymbol{{\sigma}}^{(r)}\right)^T \label{sigma}    
\end{equation} 
denote the diffusion matrix and 
\begin{equation}
\mathbf{R}(s,t):=
\left(
    \begin{matrix}
\langle \delta y_1(s) \delta y_1(t)\rangle &
\langle \delta y_1(s) \delta w_1(t) \rangle \\
\langle \delta w_1(s) \delta y_1(t) \rangle & \langle \delta w_1(s) \delta w_1(t) \rangle
\end{matrix}
\right)
\label{covar} \;  
\end{equation}
the covariance matrix, with $\delta y_1(s)\equiv y_1(s)-\langle y_1(s)\rangle$ and $\delta w_1(s)\equiv w_1(s)-\langle w_1(s)\rangle$.
In the long-time limit, the elements of the stationary covariance matrix $\overline{\mathbf{R}}$ are defined as:
\begin{equation}
  \overline{R}_{ij}:=\lim_{t\rightarrow \infty} R_{i,j}(t,t)\quad , \quad i=1,2 \; . \label{steady}  
\end{equation}
The following relation, which is an instance of the Fluctuation-Dissipation Theorem \cite{Kubo,Vulp,Pavl,Zwanzig}, connects the diffusion matrix $\mathbf{\Sigma}$ to the transport matrix $\mathbf{M}$ defined in \eqref{M} and to the stationary covariance matrix $\overline{\mathbf{R}}$:
\begin{equation}
 \mathbf{M}\overline{\mathbf{R}}+\overline{\mathbf{R}} \mathbf{M}^T=-2\mathbf{\Sigma} \label{lyap} \; .
\end{equation}
We remark that while $\mathbf{M}$ has been determined earlier via the invariant manifold method, $\mathbf{\Sigma}$ and $\overline{\mathbf{R}}$ are yet unknown. To overcome this difficulty, we impose that the elements of the matrix $\overline{\mathbf{R}}$ coincide with those of the corresponding matrix evaluated through the original dynamics in Eq. \eqref{original}. Namely, we establish:
\begin{subequations}
\label{equiv}
\begin{align}
   \overline{R}_{11}&= \lim_{t\rightarrow\infty}\langle 
   (\delta x_1(t))^2\rangle  \;, \label{equivA}\\
   \overline{R}_{12}=\overline{R}_{21}&= \lim_{t\rightarrow\infty}\langle 
\delta x_1(t) ~ \delta v_1(t)\rangle \;,\label{equivB} \\
   \overline{R}_{22}&= \lim_{t\rightarrow\infty}\langle 
   (\delta v_1(t))^2\rangle    \;  ,\label{equivC}
\end{align}
\end{subequations}
where, again, we denoted $\delta x_1(s)\equiv x_1(s)-\langle x_1(s)\rangle$ and $\delta v_1(s)\equiv v_1(s)-\langle v_1(s)\rangle$.
Eqs. \eqref{equiv} thus permit to evaluate the diffusion matrix $\mathbf{\Sigma}$ via Eq. \eqref{lyap}.

\begin{remark}
In the analysis of linear stochastic differential equations \cite{Pavl,Risken}, the diffusion matrix is typically assigned, and the Lyapunov equation \eqref{lyap} can hence be used to compute the stationary covariance matrix. In our set-up, the perspective is somehow reversed.  The scale invariance of the stationary covariance matrix permits to fix the matrix $\overline{\mathbf{R}}$ in Eq. \eqref{lyap}, which is then solved to evaluate the diffusion matrix $\mathbf{\Sigma}$.
\end{remark}

An explicit representation of $\mathbf{\Sigma}$ can be obtained in the case of identical beads, i.e. for $\gamma_1=\gamma_2=\gamma$, $\omega_1=\omega_2=\omega$ and $\sigma_1=\sigma_2=\sigma=\sqrt{2\gamma/(\beta m)}$. To address such case, we denote the eigenvalues defined in Eq. \eqref{eigen0} just as $\lambda^{\pm}$. For later use, we also set:
\begin{equation}
\phi^{\pm}=-\frac{\gamma\pm\sqrt{\gamma^2-4(\omega^2+2k)}}{2} \; \label{e}.    
\end{equation}
Following Chandrasekhar \cite{chandra}, we rewrite Eq. \eqref{original} in the form:
\begin{equation}
    \ddot{x}_i+\gamma \dot{x}_i +\omega^2 x_i+k x_i-k x_j=\sigma \dot{W}_i \; , \; i,j=1,2 \; , i\neq j \; ,\label{orig2}
\end{equation}
and then set $u=x_1+x_2$ and $z=x_1-x_2$. We thus rewrite Eq. \eqref{orig2} in the form
\begin{subequations}
\label{eq}
\begin{align}
\ddot{u}+\gamma \dot{u}+\omega^2 u&=\sigma\dot{\xi}_1\;, \label{equ} \\
\ddot{z}+\gamma \dot{z}+(\omega^2+2 k) z&=\sigma\dot{\xi}_2 \;, \label{eqz} 
\end{align}
\end{subequations}
with $\dot{\xi}_1=\dot{W}_1+\dot{W}_2$, $\dot{\xi}_2=\dot{W}_1-\dot{W}_2$. The two equations \eqref{equ}-\eqref{eqz} are now decoupled and
we may express their solutions in the form:
\begin{subequations}
\label{yz}
\begin{align}
u(t)&=\mathcal{A}_- e^{\lambda^- t} + \mathcal{A}_+ e^{\lambda^+ t}\; , \label{yzA} \\
z(t)&= \mathcal{B}_- e^{\phi^- t} + \mathcal{B}_+ e^{\phi^+ t} \; , \label{yzB}
\end{align}
\end{subequations}
where $\mathcal{A}_{\pm},\mathcal{B}_{\pm}$ are functions of time which can be determined using the method of the variation of the parameters. 
Thus, upon denoting $\dot{\mathcal{O}}\equiv \frac{d}{dt}\mathcal{O}$, we have that:
\begin{subequations}
\label{boh}
\begin{align}
\dot{\mathcal{A}}_-e^{\lambda^- t}+\dot{\mathcal{A}}_+e^{\lambda^+ t}&=0 \; , \label{bohA}\\
\lambda^-\dot{\mathcal{A}}_-e^{\lambda^- t}+\lambda^+\dot{\mathcal{A}}_+e^{\lambda^+ t}&= \sigma \dot{\xi}_1 \; \label{bohB} ,
\end{align}
\end{subequations}
and an analogous set of equations holds for the functions $\mathcal{B}_{\pm}$.
After solving the two sets of coupled ODEs, one arrives at the following expressions of the functions $\mathcal{A}_{\pm}$ and $\mathcal{B}_{\pm}$:
\begin{eqnarray}
\mathcal{A}_{\pm}(t)&=&A_{\pm}(t)+a_{\pm} \; , \nonumber\\
\mathcal{B}_{\pm}(t)&=&B_{\pm}(t)+b_{\pm} \; , \nonumber
\end{eqnarray}
where the constants $a_{\pm},b_{\pm}$ are fixed by the initial conditions, and with:
\begin{subequations}
\label{ApmA}
\begin{align}
A_{\pm}(t)&=\mp \frac{\sigma}{(\lambda^- - \lambda^+)}\int_0^t e^{-\lambda^{\pm} s} \dot{\xi}_1(s) ds \; , \label{Apm}\\
B_{\pm}(t)&=\mp \frac{\sigma}{(\phi^- - \phi^+)}\int_0^t e^{-\phi^{\pm} s} \dot{\xi}_2(s) ds \label{Bpm} \;.
\end{align}
\end{subequations}
We can then provide the analytical solution to the system \eqref{orig2} in the form:
\begin{subequations}
\label{xv1}
\begin{align}
x_1(t)&= \frac{1}{2}\left(A_- e^{\lambda^- t} + A_+ e^{\lambda^+ t}+ B_- e^{\phi^- t} + B_+ e^{\phi^+ t}\right. \nonumber\\
&+ \left. a_- e^{\lambda^- t}+a_+ e^{\lambda^+ t}+b_- e^{\phi^- t}+b_+ e^{\phi^+ t}\right) \; , \label{x1b}  \\
v_1(t)&= \frac{1}{2}\left(\lambda^- A_- e^{\lambda^- t} + \lambda^+ A_+ e^{\lambda^+ t} + \phi^- B_- e^{\phi^- t} + \phi^+ B_+ e^{\phi^+ t}\right. \nonumber\\
&+ \left. \lambda^- a_- e^{\lambda^- t}+\lambda^+ a_+ e^{\lambda^+ t}+ \phi^- b_- e^{\phi^- t}+b_-\phi^+ e^{\phi^+ t}\right)  \; . \label{v1b}
\end{align}
\end{subequations}
To proceed further, it is necessary to evaluate the following averages:
\begin{eqnarray*}
\langle A_{\pm}^2(t)\rangle &=&
\frac{\sigma^2}{(\lambda^- -\lambda^+)^2} \int_0^t e^{\lambda_{\pm} s} ds \int_0^{t} e^{-\lambda^{\pm} \tau} \langle \dot{\xi}_1(s) \dot{\xi}_1(\tau)  \rangle d\tau \\
&=&-\frac{1}{\lambda^{\pm}}\frac{\sigma^2}{(\lambda^- -\lambda^+)^2} \left(e^{-2\lambda^{\pm} t}-1\right)\;, \\
\langle A_{+}(t)A_{-}(t) \rangle &=&   \frac{2}{(\lambda^{+}+\lambda^-)}\frac{\sigma^2}{(\lambda^- -\lambda^+)^2} \left(e^{-(\lambda^{+}+\lambda^-) t}-1\right)\;, \\
\langle A_{+}(t)B_{\pm}(t) \rangle &=& \langle A_{-}(t)B_{\pm}(t) \rangle = 0  \;, \\
\langle B_{\pm}^2(t) \rangle &=& -\frac{1}{ \phi^{\pm}}\frac{\sigma^2}{(\phi^- -\phi^+)^2} \left(e^{-2\phi^{\pm} t}-1\right) \; , \\
 \langle B_{+}(t)B_{-}(t) \rangle &=& \frac{2}{(\phi^{+}+\phi_-)}\frac{\sigma^2}{(\phi^- -\phi^+)^2} \left(e^{-(\phi^{+}+\phi^-) t}-1\right) \; ,
\end{eqnarray*}
where we used that $\left\langle \dot{W}_1(s)\dot{W}_2(t)\right\rangle=0$, for any $s,t >0$.

After some straightforward, albeit lengthy, calculations one finds:
\begin{subequations}
\label{Rij}
\begin{align}
\overline{R}_{11}(k)&=\frac{\sigma^2}{4}\left[\frac{1}{(\lambda^- -\lambda^+)^2}\left(-\frac{1}{\lambda^-}+\frac{4}{\lambda^+ +\lambda^-}-\frac{1}{\lambda^+}  \right) \nonumber \right. \\
&\qquad+ \left.\frac{1}{(\phi^- -\phi^+)^2}\left(-\frac{1}{\phi^-}+\frac{4}{\phi^+ +  \phi^-}-\frac{1}{\phi_+}  \right)
\right] \; , \label{R11} \\
\overline{R}_{12}(k)=\overline{R}_{21}(k)&=0 \;, \label{Rija} \\
\overline{R}_{22}(k)&=\frac{\sigma^2}{2 \gamma} \; . \label{R22} 
\end{align}
\end{subequations}
We observe that for $k=0$ it holds $\overline{R}_{11}(0)=(\beta m \omega^2)^{-1}$ and $\overline{R}_{22}(0)=(\beta m)^{-1}$. 
From Eq. \eqref{lyap}, one finally determines the diffusion matrix $\mathbf{\Sigma}$, whose elements read:
\begin{subequations}
\label{sig}
\begin{align}
\Sigma_{11}(k)&= 
0 \; ,\label{sig11}\\
\Sigma_{12}(k)&=\Sigma_{21}(k)=
-\frac{1}{2}\left(\overline{R}_{22}+\Omega_1^2(k) \overline{R}_{11}\right) \; , \label{sig12}\\
\Sigma_{22}(k)&= 
\Gamma_1(k) \overline{R}_{22} \label{sig22} \; .
\end{align}
\end{subequations}
where $\Omega_1^2(k)$ and $\Gamma_1(k)$ are given in Eqs. \eqref{Om} and \eqref{Ga}.

In the absence of coupling, for $k=0$, one correctly finds $\Sigma_{11}(0)=\Sigma_{12}(0)=\Sigma_{21}(0)=0$ and $\Sigma_{22}(0)= \gamma/(\beta m)=\sigma^2/2$, thus recovering the structure of the diffusion matrix pertaining to a single underdamped oscillator. 
The behavior of the real parts of $\Sigma_{12}$ and $\Sigma_{22}$ as functions of $k$ is illustrated in Fig. \ref{fig:fig3}. The figure clearly shows that the coupling between the two Brownian oscillators not only alters the value of $\Sigma_{22}$ with respect to the model with no coupling, but gives also rise to non-zero off-diagonal entries $\Sigma_{12}=\Sigma_{21}$ of the diffusion matrix.

\begin{figure}[ht]
     \centering
        \includegraphics[width=0.6\textwidth]{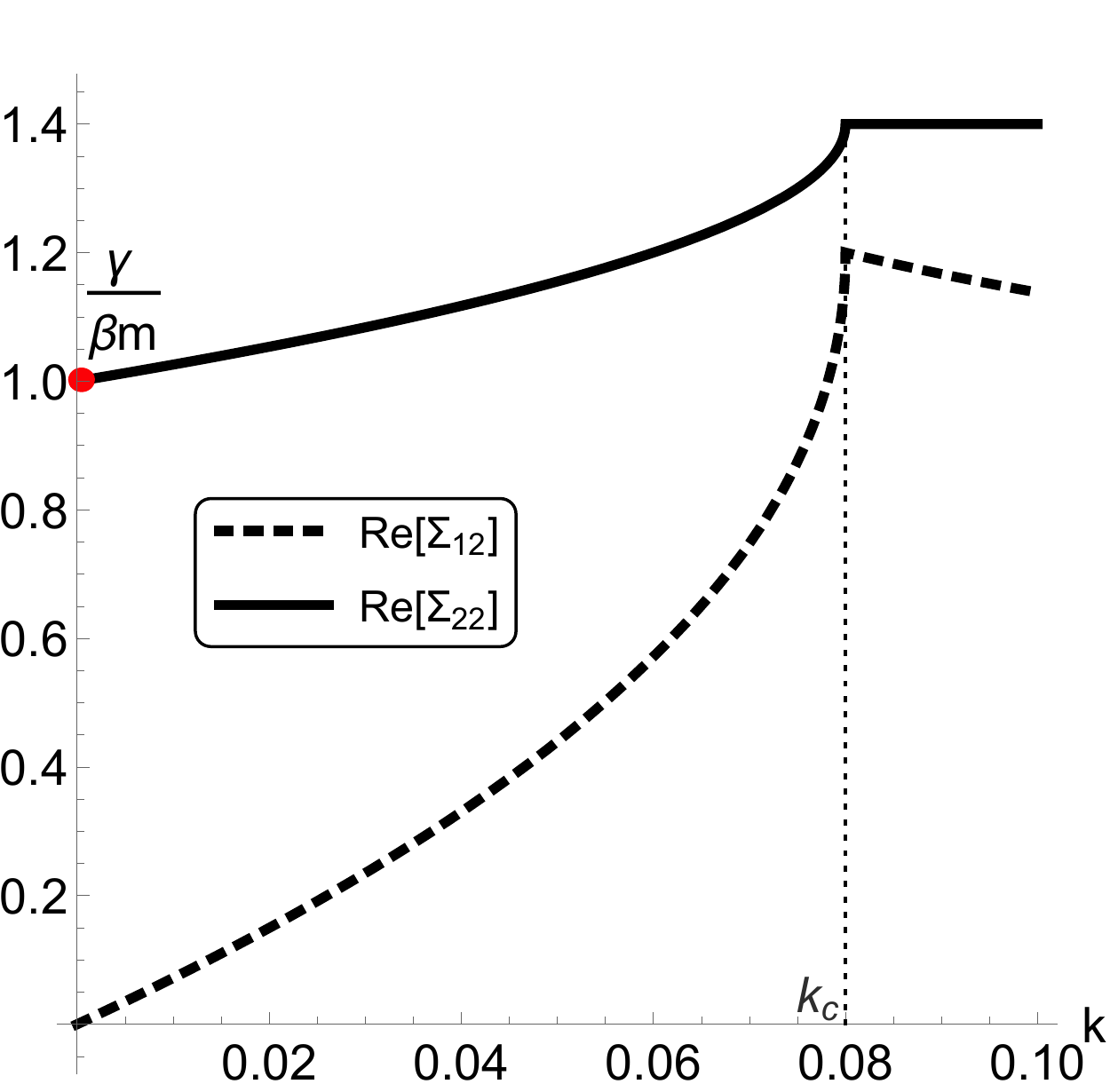}         
        \caption{
        Behavior of the real part of $\Sigma_{12}$ and $\Sigma_{22}$ as functions of $k$ for identical beads for the reduced dynamics defined in Eq. \eqref{red2}. The red disk on the vertical axis, at $k=0$, corresponds to the reference value $\gamma/\beta m$. The parameters are set to the values $\omega_1=\omega_2=\omega=0.3$, $\gamma_1=\gamma_2=\gamma=1$, $\beta=1$ and $m=1$.
        } 
        \label{fig:fig3}
\end{figure}



\section{Coupled overdamped oscillators}\label{sec:IM2}
We now proceed on a different route along our scheme of model reduction: we aim at eliminating the velocities from the original dynamics \eqref{original} (i.e. the top right arrow in Fig. \ref{fig: diagram}). 
The procedure here follows the guidelines previously traced in Sec. \ref{sec:IM}. As long as no ambiguity arises, we shall also retain much of the notation used earlier.
We thus look for a closure of the form:
\begin{subequations}
\begin{align}
   \langle v_1 \rangle &= \overline{a} \langle x_1 \rangle+ \overline{b} \langle x_2 \rangle,   \\
    \langle v_2 \rangle &= \overline{c} \langle x_1 \rangle+ \overline{d} \langle x_2 \rangle \; ,  
\end{align}
\end{subequations}
where the real-valued coefficients $\overline{a},\overline{b},\overline{c},\overline{d}$ must be determined as functions of the parameters of the model. Next, we set:
\begin{equation*}
\partial_t^{micro} \langle v_1 \rangle :=   -(\w_{0,1}^2+ \gamma_1 ~\overline{a})\langle x_1 \rangle+(k- \gamma_1 ~\overline{b}) \langle x_2 \rangle \; ,
\end{equation*}
and
\begin{equation*}
 \partial_t^{macro} \langle v_1 \rangle :=   \left(\overline{a}^2+\overline{b}~\overline{c}\right) \langle x_1 \rangle + \left(\overline{a}~\overline{b}+\overline{b}~\overline{d}  \right) \langle x_2 \rangle \;,
\end{equation*}
as well as
\begin{equation*}
 \partial_t^{micro} \langle v_2 \rangle :=   (k-\gamma_2~\overline{c})\langle x_1 \rangle-(\w_{0,2}^2+ \gamma_2~\overline{d}) \langle x_2 \rangle \; ,
\end{equation*}
and
\begin{equation*}
 \partial_t^{macro} \langle v_2 \rangle :=   \left(\overline{a}~\overline{c}+\overline{d}~\overline{c}\right) \langle x_1 \rangle + \left(\overline{b}~\overline{c}+\overline{d}^2  \right) \langle x_2 \rangle \;.
\end{equation*}
The Invariance Equations can hence be cast in the form:
\begin{subequations}
\label{inveq2}
\begin{align}
 \overline{a}^2+\overline{b}~\overline{c} +\w_{1}^2+k+\gamma_1~\overline{a}  &= 0, \label{IE1b}\\
 \overline{a}~\overline{b}+\overline{b}~\overline{d} -k+\gamma_1~\overline{b} &= 0, \label{IE2b}\\
 \overline{a}\overline{c}+\overline{c}~\overline{d}-k+\gamma_2~\overline{c}  &= 0, \label{IE32} \\
\overline{b}~\overline{c}+\overline{d}^2+\w_{2}^2+k+\gamma_2~\overline{d} &= 0. \label{IE4b}
\end{align}
\end{subequations}
We then write the reduced deterministic dynamics as:
\begin{equation}
    \dot{\mathbf{x}}=\mathbf{P} \mathbf{x}  \, , \label{red2b2}
\end{equation}
with $\mathbf{x}$ denoting the vector of variables $\mathbf{x}=(\langle x_1 \rangle,\langle x_2 \rangle)$, and $\mathbf{P}\in\mathbb{R}^{2 \times 2}$ the matrix 
\begin{equation}
    \mathbf{P}=\begin{pmatrix}
\overline{a} & \overline{b} \\
\overline{c} & \overline{d}
\end{pmatrix} \, . \label{M2}
\end{equation}
Let 
\begin{equation}
    \zeta^{\pm}(k)=\frac{(\overline{a}+\overline{d})\pm\sqrt{(\overline{a}-\overline{d})^2+4 \overline{b} \overline{c}}}{2}  \label{eigenval2}
\end{equation}
denote the two distinct eigenvalues of $\mathbf{P}$ and call $\zeta_M(k)$ and $\zeta_m(k)$ the larger and the smaller of them, respectively, in the interval $[0,k_c]$. We also set
\begin{equation}
\lambda_{m}:=\textrm{min}\{\lambda_{1}^{-},\lambda_{2}^{-}\} \qquad \text{and} \qquad
\lambda_{M}:=\textrm{max}\{\lambda_{1}^{-},\lambda_{2}^{-}\} \; ,\label{minmax2}
\end{equation}
where $\lambda_i^-$, $i=1,2$, represents the ``slow'' eigenvalue in the uncoupled dynamics of the $i$-th oscillator.
Thus, among the various sets of solutions $\{\overline{a}(k),\overline{b}(k),\overline{c}(k),\overline{d}(k)\}$ of the Invariance Equations \eqref{inveq2}, the relevant ones, in this context, are continuous functions with asymptotics:
\begin{equation}
  \lim_{k\rightarrow 0}\zeta_{M}(k)=\lambda_{M}\qquad \text{and} \qquad \lim_{k\rightarrow 0}\zeta_{m}(k)=\lambda_{m} \label{req2}  \; .
\end{equation}

\begin{figure}[ht]
     \centering
       \includegraphics[width=0.45\textwidth]{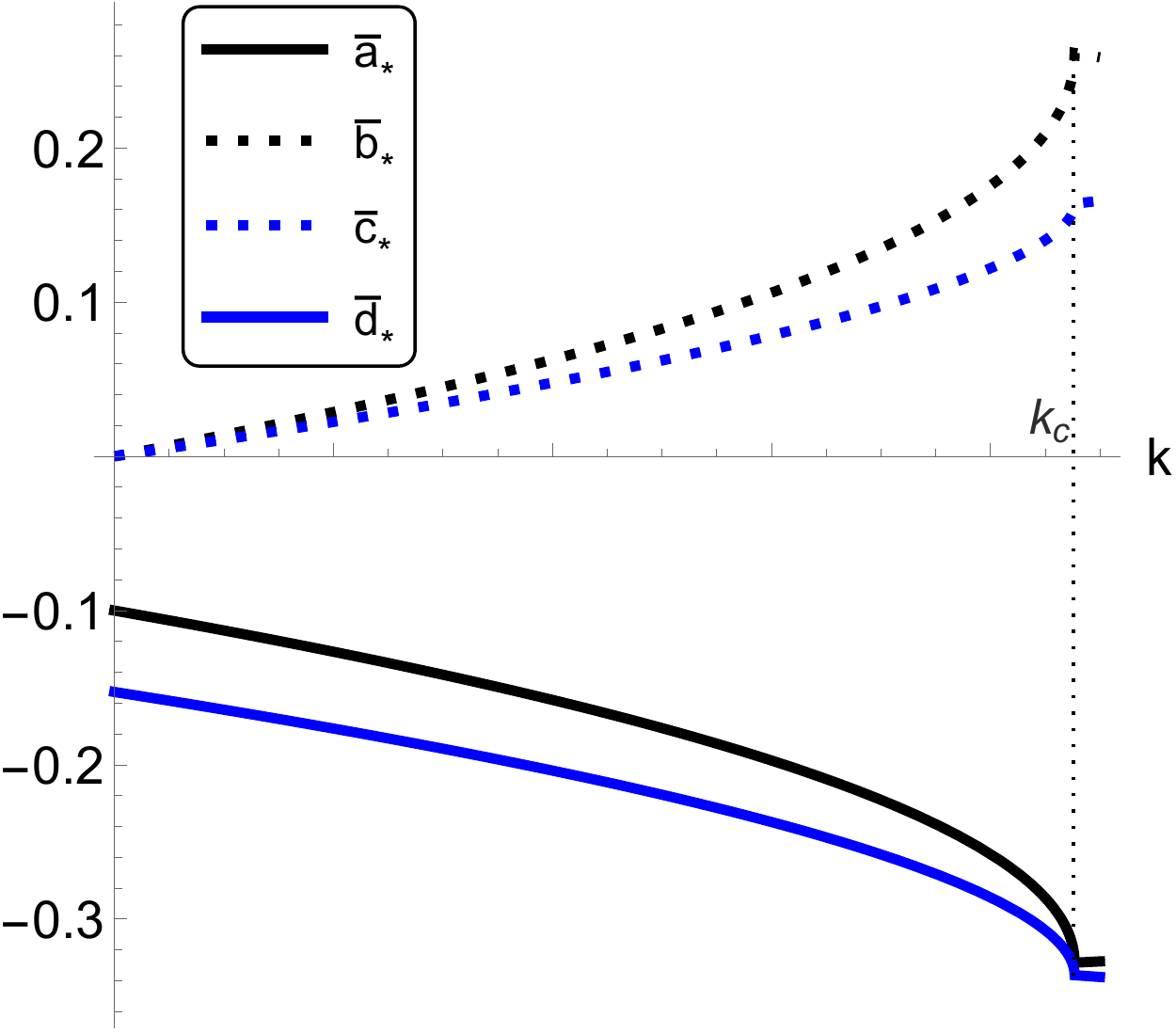}
       \hskip 20pt
       \includegraphics[width=0.41\textwidth]{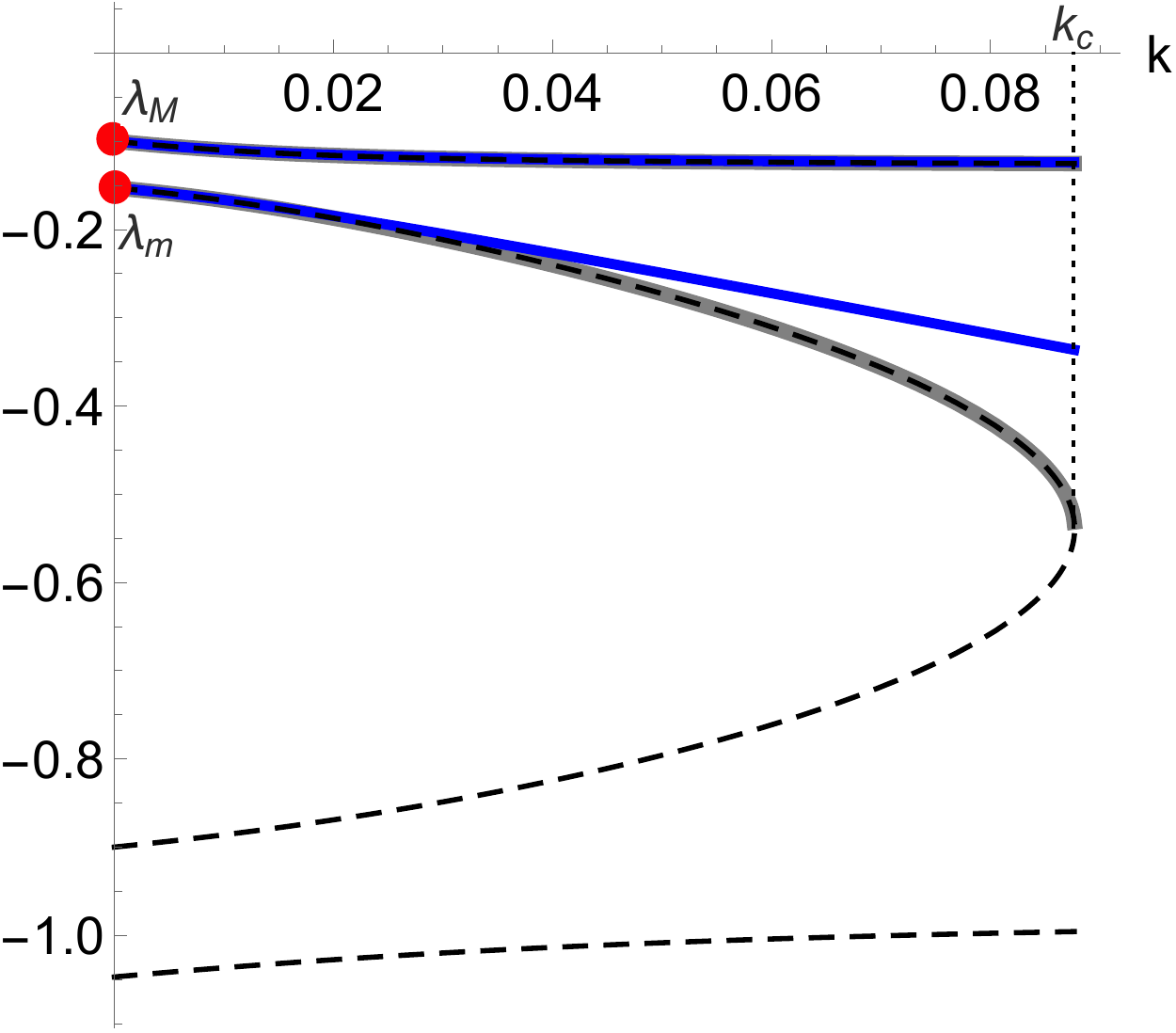}       
        \caption{\textit{Left panel}: Behavior of the real part of the coefficients $\overline{a}_*,\overline{b}_*,\overline{c}_*,\overline{d}_*$ solving the system \eqref{IE1b}--\eqref{IE4b} and verifying \eqref{req2}, as functions of $k$. \textit{Right panel}: Behavior of the real part of the eigenvalues $\zeta_{i}(k)$ (thick gray lines) and the eigenvalues of $\mathbf{Q}$ (dashed black lines), as functions of $k$. The red disks on the vertical axis indicate the values of $\lambda_{m}$ and $\lambda_{M}$ given in Eq. \eqref{minmax2}. The blue lines correspond to the weak coupling regime represented by Eqs. \eqref{a1}-\eqref{d1}. In both panels the parameters are fixed to the same values as in Fig. \ref{fig:fig1}.}
        \label{fig:fig4}
\end{figure}
\AM{In the left panel} of Fig. \ref{fig:fig4} the set of coefficients $\overline{\mathcal{S}}_*=\{\overline{a}_*,\overline{b}_*,\overline{c}_*,\overline{d}_*\}$ solving Eqs. \eqref{inveq2} and fulfilling the asymptotics \eqref{req2} are shown as functions of $k$. 
It is worth noticing that \AM{the off-diagonal entries of the matrix $\mathbf{P}$ vanish} \AM{in the limit} $k\rightarrow 0$, whereas in the same limit the functions $\overline{a}_*$ and $\overline{d}_*$ attain values corresponding to the pair of eigenvalues $\lambda_1^-$ and $\lambda_2^-$.
The right panel of Fig. \ref{fig:fig4} illustrates the behavior of the real part of the eigenvalues $\zeta^{\pm}$ as functions of $k$.  
In analogy with the results of Sec. \ref{sec:exact}, the graphs of the two functions $\zeta^{\pm}(k)$ are found to coincide, over the interval $[0,k_c]$, with the graphs of the branches of the spectrum of the matrix $\mathbf{Q}(k)$ sharing the same asymptotics as $k\rightarrow 0$. 
\AM{Furthermore, a straightforward application} of the Chapman-Enskog scheme yields the \AM{precise}  expressions of the coefficients $\{a_*,b_*,c_*,d_*\}$ in the weak coupling regime.
One finds:
\begin{subequations}
\label{boh2}
\begin{align}
\overline{a}_0&=-\frac{\gamma_1 - \sqrt{\gamma_1^2-4 \omega_1^2}}{2}=\lambda_1^{-} \label{a0}\\
\overline{b}_0&=\overline{c}_0=0 \label{bc0} \\
\overline{d}_0&=-\frac{\gamma_2 - \sqrt{\gamma_2^2-4 \omega_2^2}}{2}=\lambda_2^{-} \label{d0}
\end{align}
\end{subequations}
and
\begin{subequations}
\label{wk}
\begin{align}
\overline{a}_1&=-\frac{1}{\sqrt{\gamma_1^2-4\omega_1^2}} \label {a1}\\
\overline{b}_1&=\frac{2}{\gamma_1-\gamma_2+S} \label {b1}\\
\overline{c}_1&=\frac{2}{\gamma_2-\gamma_1+S} \label {c1}\\
\overline{d}_1&=-\frac{1}{\sqrt{\gamma_2^2-4\omega_2^2}} \label {d1}
\end{align}
\end{subequations}

with
\begin{equation}
    S=\sqrt{\gamma_1^2-4\omega_1^2}+ \sqrt{\gamma_2^2-4\omega_2^2}  \; .\label {S}
\end{equation}

At variance with the exact reduced dynamics \eqref{red2b2}, evaluated with the set $\overline{\mathcal{S}}_*$, the weak coupling approximation, denoted by the solid blue lines in the right panel of Fig. \ref{fig:fig4}, fails at reconstructing the spectrum of the original matrix $\mathbf{Q}(k)$ over the whole interval $[0,k_c]$, and does not detect the critical value $k_c$ either.

\subsection{The Fluctuation-Dissipation Theorem reloaded}
\label{sec:fdt2}
White noise is introduced in the reduced dynamics via, again, the Fluctuation-Dissipation Theorem.
We thus construct a Markov process for the variables $\{y_1,y_2\}$ (which now replace $\{x_1,x_2\}$, respectively) in the form:
\begin{subequations}
\label{red3}
\begin{align}
 \dot{y}_1&=a y_1+ b y_2 + \sigma_{11}^{(r)} \dot{W}_1+\sigma_{12}^{(r)} \dot{W}_2 \, , \label{red3a}
\\
\dot{y}_2&=c y_1+ d y_2 + \sigma_{21}^{(r)} \dot{W}_1+\sigma_{22}^{(r)} \dot{W}_2 \, , \label{red3b}   
\end{align}
\end{subequations}
where $\{a,b,c,d\}$ are the elements of the matrix $\mathbf{M}$ in \eqref{M2}, while the yet unknown parameters $\sigma_{i,j}^{(r)}$, $i,j=1,2$ enter the definition of the diffusion matrix $ \mathbf{\Sigma}=\frac{1}{2}\boldsymbol{{\sigma}}^{(r)} \left(\boldsymbol{{\sigma}}^{(r)}\right)^T$. Adopting the notation of Sec. \ref{sec:fdt}, the elements of the covariance matrix $\mathbf{R}(t,s)$ are now defined as:
\begin{equation}
R_{ij}(t,s):=\langle 
\delta y_i(t) ~ \delta y_j(s)
\rangle \; , \; i=1,2 \; , 
\end{equation}
and we also set
\begin{equation}
  \overline{R}_{ij}:=\lim_{t\rightarrow \infty} R_{i,j}(t,t) \;. \label{steady2} 
\end{equation}
We then exploit the Lyapunov equation \eqref{lyap} to determine the elements of the diffusion matrix $\mathbf{\Sigma}$. 
To this aim, we must require that the elements of the stationary covariance matrix $\overline{\mathbf{R}}$ in Eq. \eqref{steady2} coincide with those of the corresponding covariance matrix evaluated through the original dynamics. Namely, we set:
\begin{equation}
   \overline{R}_{ij}= \lim_{t\rightarrow\infty}\langle 
\delta x_i(t) ~ \delta x_j(t)
\rangle \label{equiv2} \; .
\end{equation}
An explicit representation of $\mathbf{\Sigma}$ can be provided, also in this case, for identical beads, namely with $\gamma_1=\gamma_2=\gamma$, $\omega_1=\omega_2=\omega$ and $\sigma_1=\sigma_2=\sigma=\sqrt{2\gamma/(\beta m)}$.
Repeating the derivation outlined in Sec. \ref{sec:fdt}, we obtain:
\begin{subequations}
\label{x1x2}
\begin{align}
x_1(t)&= \frac{1}{2}\left(A_- e^{\lambda^- t} + A_+ e^{\lambda^+ t}+ B_- e^{\phi^- t} + B_+ e^{\phi^+ t}\right. \nonumber\\
&+\left. a_- e^{\lambda^- t}+a_+ e^{\lambda^+ t}+b_- e^{\phi^- t}+b_+ e^{\phi^+ t}\right) \; , \label{x1b2}  \\
x_2(t)&= \frac{1}{2}\left(A_- e^{\lambda^- t} + A_+ e^{\lambda^+ t} - B_- e^{\phi^- t} - B_+ e^{\phi^+ t}\right. \nonumber\\
&+ \left. a_- e^{\lambda^- t}+a_+ e^{\lambda^+ t}-b_- e^{\phi^- t}-b_+ e^{\phi^+ t}\right)  \; , \label{x2b} 
\end{align}
\end{subequations}

and finally arrive at the following expressions:
\begin{subequations}
\label{Rij2}
\begin{align}
\overline{R}_{11}(k)=\overline{R}_{22}(k)&=\frac{\sigma^2}{4}\left[\frac{1}{(\lambda^- -\lambda^+)^2}\left(-\frac{1}{\lambda^-}+\frac{4}{\lambda_+ + \lambda_-}-\frac{1}{\lambda^-}  \right) \nonumber \right. \\
&+ \left.\frac{1}{(\phi^- -\phi^+)^2}\left(-\frac{1}{\phi^-}+\frac{4}{\phi^+ +\phi^-}-\frac{1}{\phi^-}  \right)
\right] \; , \label{R11b} \\
\overline{R}_{12}(k)=\overline{R}_{21}(k)&=\frac{\sigma^2}{4}\left[\frac{1}{(\lambda^- -\lambda^+)^2}\left(-\frac{1}{\lambda^-}+\frac{4}{\lambda^+ + \lambda^-}-\frac{1}{\lambda^+}  \right) \nonumber \right. \\
&- \left.\frac{1}{(\phi^- -\phi^+)^2}\left(-\frac{1}{\phi^-}+\frac{4}{\phi^+ + \phi^-}-\frac{1}{\phi^+}  \right)
\right] \;. \label{R12b}
\end{align}
\end{subequations}
It should be noted that Eqs. \eqref{R11b}-\eqref{R12b} imply  $\overline{R}_{11}(0)=\overline{R}_{22}(0)=(\beta m \omega^2)^{-1}$ and $\overline{R}_{12}(0)=\overline{R}_{21}(0)=0$, thus recovering the results known for the single-oscillator model.
From Eq. \eqref{lyap}, one obtains the diffusion matrix $\mathbf{\Sigma}$, whose elements read:
\begin{subequations}
\label{sigm}
\begin{align}
\Sigma_{11}(k)&= - \left(a \overline{R}_{11} + b \overline{R}_{12}\right) \; , \label{sigm11}\\
\Sigma_{12}(k)=\Sigma_{21}(k)&= 
-\frac{1}{2}\left[(b + c) \overline{R}_{11} + (a + d) \overline{R}_{12}\right] \; ,\label{sigm12}\\
\Sigma_{22}(k)&= - \left(d \overline{R}_{11} + c \overline{R}_{12}\right) \label{sigm22} \; .
\end{align}
\end{subequations}
The behavior of the real part of $\Sigma_{11}(k)=\Sigma_{22}(k)$ and $\Sigma_{12}(k)=\Sigma_{21}(k)$ is shown in Fig. \ref{fig:fig5}, which highlights some interesting features induced by the coupling in the reduced dynamics defined in Eq. \eqref{red3}. In fact, not only the limiting value $\Sigma_{11}(0)=\Sigma_{22}(0)=\lambda_-/(\beta m \omega^2)$ recovers the exact diffusion coefficient of a single overdamped oscillator model obtained in \cite{ColMun22}, but for values of $k$ in the interval $[0,k_c]$, there also appears a \textit{negative} cross-diffusion coefficient $\Sigma_{12}=\Sigma_{21}$. We also remark that the diffusion matrix $\mathbf{\Sigma}$, evaluated through the set $\{a_*,b_*,c_*,d_*\}$ solving Eqs. \eqref{IE1b}-\eqref{IE4b}, is symmetric and positive semidefinite, since $\Sigma_{ii}>0$, $i=1,2$ and its determinant $Det[\mathbf{\Sigma}]\ge 0$.

\begin{figure}[ht]
     \centering
       \includegraphics[width=0.6\textwidth]{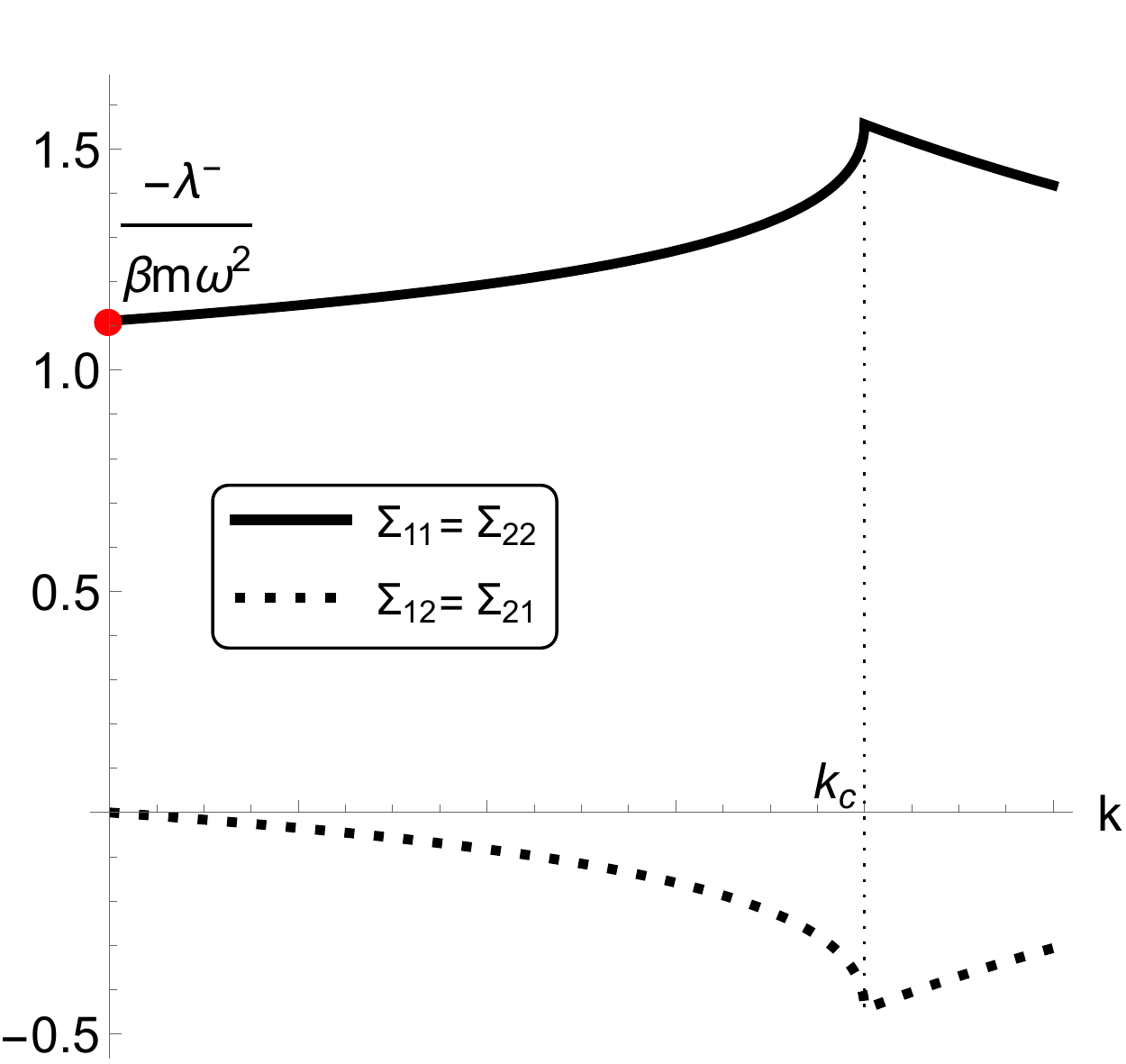}       
        \caption{Behavior of the real part of $\Sigma_{12}$ and $\Sigma_{22}$ as functions of $k$ for identical bead for the reduced dynamics defined in Eq. \eqref{red3}. The red disk at $k=0$ corresponds to the limiting value $-\lambda^{-}/\beta m \omega^2$. The parameters are set to the same values as in Fig. \ref{fig:fig3}.}
        \label{fig:fig5}
\end{figure}

\section{A single overdamped oscillator}
\label{sec:commute}

Let us now attempt one final step along our procedure of model reduction. Turning back, for a moment, at the reduction diagram shown in Fig. \ref{fig: diagram}, we observe that we have pursued, so far, two alternative paths,
corresponding to the top left and top right arrows shown in the diagram. 
In the first case, considered in Sec. \ref{sec:IM}, we eliminated the variables $\{x_2,v_2\}$ describing the second oscillator (top left arrow); in the second case, tackled in Sec. \ref{sec:IM2}, we removed both velocities $\{v_1,v_2\}$ from the original description, and obtained a system of coupled overdamped oscillators (top right arrow).
One may proceed even further at this stage, and construct a lower dimensional description in terms of the only variable $x_1$. This operation can be done by either removing the velocity $v_1$ from one model, or the position $x_2$ from the other (bottom left and bottom right arrows, respectively).
In either case, we target a reduced \AM{description} for the variable $z_1\in\mathbb{R}$ (here replacing $x_1$) in the form:
\begin{equation}
    \dot{z}_1=\alpha(k) z_1 + \sqrt{2 \mathcal{D}_r(k)} \dot{W}_1
    \label{red4} \; ,
\end{equation}
where the real-valued coefficients $\alpha$ and $\mathcal{D}_r$ depend on the various parameters of the model and, specifically, on the coupling parameter $k$.
\AM{A natural question thus arises:  do the parameters $\alpha(k)$ and $\mathcal{D}_r(k)$, in Eq. \eqref{red4}, depend on the chosen reduction path, or are they instead uniquely determined by the attained level of description?} Thus, the question is whether the two paths denoted by the left-sided and right-sided arrows in Fig. \ref{fig: diagram} commute.  We shall answer positively this question by explicitly evaluating the coefficients $\alpha$ and $\mathcal{D}_r$ along the two alternative paths.

Let us first address one of the two aforementioned reduction steps, e.g. the one denoted by the bottom left arrow in the diagram. Thus, starting from Eqs. \eqref{red2}, one finds that $\alpha(k)$ equals the eigenvalue of the matrix $\mathbf{M}$ endowed with the minimal non-zero absolute value, i.e.:
\begin{equation}
\alpha(k)=\xi^-(k) \label{xi1} \; .
\end{equation}
Therefore, owing to Eq. \eqref{req}, $\alpha$ also has the following asymptotics:
\begin{equation}
    \lim_{k\rightarrow 0}\alpha(k)=\lambda_1^{-} \label{alpha} \; .
\end{equation}
which recovers the result outlined in Ref. \cite{ColMun22}.
We remark that the function $\alpha(k)$ fully matches the branch of the spectrum of the original matrix $\mathbf{Q}(k)$ sharing the same asymptotic behavior. 
The bottom right arrow in the diagram of Fig. \ref{fig: diagram} amounts, instead, to looking for an overdamped dynamics for a single oscillator obtained from Eqs. \eqref{red3}. In this case, after some straightforward algebra, one finds that $\alpha(k)$ coincides with one of the two eigenvalues of the matrix $\mathbf{P}$, i.e.:
\begin{equation}
\alpha(k)\in \{\zeta^{+},\zeta^-\} \; . \label{alpha2}
\end{equation}
Next, enforcing the asymptotics \eqref{alpha} makes it possible to identify $\alpha(k)$ with the meaningful branch of the spectrum of the matrix $\mathbf{Q}(k)$. Hence, both derivations eventually lead to the same result: $\alpha(k)$ coincides with the eigenvalue of $\mathbf{Q}(k)$ reducing to $\lambda_1^-$ as $k\rightarrow 0$.
Next, the coefficient $\mathcal{D}_r$ can be determined in both cases using the Fluctuation-Dissipation relation. Namely, one invokes the scale invariance of the stationary covariance matrix, which in this case amounts to requiring 
\begin{equation}
    \lim_{t\rightarrow\infty} \langle 
(\delta z_1(t))^2\rangle = \overline{R}_{11} \;
    , \label{z1sq}
\end{equation}
with $\overline{R}_{11}$ given in Eqs. \eqref{R11} and \eqref{R11b} in the first and in the second derivation, respectively. Then, Eq. \eqref{lyap} finally leads to:
\begin{equation}
\mathcal{D}_r(k)= -\alpha(k) \overline{R}_{11} \; . \label{Dr}    
\end{equation}
It is worth pointing out that the same coefficient $\mathcal{D}_r(k)$ is obtained along both considered derivations owing to the uniqueness of $\alpha(k)$ as well as to the stipulated scale invariance of the stationary covariance matrix, which makes the two expressions of $\overline{R}_{11}$, in Eqs. \eqref{R11} and \eqref{R11b}, coincide.

\section{Conclusions}
\label{sec: conclusion}
Reducing the analytical and computational complexity of many-particle systems is a central challenge in many scientific and engineering problems. In this paper we have developed self-consistent reduction schemes for linearly coupled Langevin equations that have been widely employed in the modelling and analysis of proteins, linear networks, and polymers \cite{Degond2009, Berkowitz1981,Soheilifard2011}. We started from a reference dynamics constituted by a classical toy model of non-equilibrium statistical mechanics, for which exact calculations can be carried out: two one-dimensional coupled Brownian oscillators. We then constructed a contracted Markovian description of the original model proceeding through two distinct steps. We first employed the invariant manifold method and the principle of Dynamic Invariance to obtain the deterministic component of the reduced dynamics. Then, we used the Fluctuation-Dissipation Theorem to compute the diffusion matrix. The proposed method displays several noteworthy features: it preserves the relevant branches of the spectrum of the original transport matrix as well as the relevant elements of the stationary 
covariance matrix, while it also fulfills, by construction, the Fluctuation-Dissipation Theorem expressed by a suitable Lyapunov equation. The application of the principle of Dynamic Invariance was shown, in \cite{GorKar05}, to lead to results that are equivalent to an exact summation of the Chapman-Enskog expansion. Here we show that the exact solution of the Invariance Equations, at variance with lower order approximations (such as the one yielding the weak coupling regime), witnesses the presence of a critical value of the coupling parameter which also occurs in the original dynamics. Another remarkable feature of our reduction method is that the transport and diffusion coefficients pertaining to a chosen level of description \textit{do not} depend on the reduction path: we highlighted this result through an explicit computation in Sec. \ref{sec:commute}. It would be interesting to reinterpret our results
  
We expect our method to be applicable to slightly more complicated coupled stochastic differential equations, such as models constituted by anharmonic chains of oscillators \cite{Eckmann1999,Eckmann2000}, or chains of oscillators of Kuramoto-Sakaguchi type, cf. \cite{Gottwald2015,SmithGottwald2020,YueLachlanGottwald2020}. Another promising application of our method concerns the reduction of an arbitrary nonlinear system interacting bilinearly with a harmonic oscillator heat bath, as discussed e.g. in \cite[Chapter~1.6]{Zwanzig}. As a direct follow-up of this work, other potentially interesting topics for future study include the generalization of our method to nonlinear reaction coordinates and the derivation of quantitative estimates for the practical control of approximation errors. 

\appendix
\label{sec:app}

\section{Roots of characteristic polynomials}
In this appendix, we provide detailed computations and proof for proposition \ref{prop: eigenvalues}. We start with an auxiliary well-known Lemma on the roots of a quartic equation, see for instance \cite{Rees1922}.
\begin{lemma}
Consider the general quartic equation
$$
\mathcal{Q}(x)=ax^4+bx^3+cx^2+dx+e=0,
$$
with \AM{given coefficients $a,b,c,d,e\in\mathbb{R}$ so that}  $a\neq 0$.
\AM{Furthermore, set}
\begin{subequations}
\label{eq: DeltaPD}
\begin{align}
\Delta&:=256 a^3 e^3-192 a^2bde^2-128 a^2c^2 e^2+144 a^2cd^2e-27a^2d^4+144 ab^2ce^2\notag
\\&\qquad-6ab^2d^2 e-80 abc^2de+18 abcd^3+16 ac^4 e-4ac^3d^2-27b^4e^2\notag
\\&\qquad\qquad+18 b^3cde-4b^3d^3-4b^2c^3e+b^2c^2d^2,\notag
\\ P&:=8ac-3b^2,\notag
\\ D&:= 64a^3e-16a^2c^2+16ab^2c-16a^2bd-3b^4.
\end{align}
\end{subequations}
Then \AM{the following statements hold true:}
\begin{itemize}
\item[(i)] $\mathcal{Q}$ has two distinct real roots and two complex conjugate non-real roots if and only if $\Delta<0$.
\item[(ii)] $\mathcal{Q}$ has four distinct real roots if and only if $\Delta>0, P<0, D<0$.
\item[(iii)] $\mathcal{Q}$ has two pairs of non-real complex conjugate roots if and only if $\Delta>0$ and either $P>0$ or $D>0$.
\end{itemize}
\end{lemma}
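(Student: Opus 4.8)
The plan is to normalise the quartic to a \emph{depressed monic} form, reduce each of the three sign quantities $\Delta,P,D$ to a simple invariant of that normal form, and then read off the root structure from the sign of the discriminant together with two elementary inequalities. First I would apply the Tschirnhaus shift $x=y-\tfrac{b}{4a}$ and divide by $a$, producing the depressed monic quartic
\[
g(y)=y^4+py^2+qy+r,\qquad p=\frac{8ac-3b^2}{8a^2}.
\]
Since a translation leaves all root differences unchanged and the displayed $\Delta$ is the classical discriminant, which satisfies $\Delta=a^{6}\prod_{i<j}(\rho_i-\rho_j)^2$ for the roots $\rho_i$, one gets $\Delta=a^{6}\Delta_0$, where $\Delta_0$ is the discriminant of $g$. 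Reading off the coefficient of $y^2$ gives $P=8a^2p$, and eliminating $e$ in favour of $r$ (through the explicit formula for $r$, with the cross terms collapsing via $c=(P+3b^2)/(8a)$) reduces the ungainly expression for $D$ to the single invariant
\[
D=16a^4\left(4r-p^2\right).
\]
As $a\neq0$ forces $a^2,a^4,a^6>0$, the signs of $\Delta,P,D$ coincide with those of $\Delta_0,\,p,\,4r-p^2$, so it suffices to prove the lemma for $g$.

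Next I would determine $\operatorname{sgn}\Delta_0$ from the root pattern. For four distinct real roots every factor $(\rho_i-\rho_j)^2$ is positive, so $\Delta_0>0$. For two real roots together with a conjugate pair $\gamma,\bar\gamma$, the single factor $(\gamma-\bar\gamma)^2<0$ while the remaining factors are positive (the mixed terms pair up into moduli such as $|\alpha-\gamma|^4$), giving $\Delta_0<0$; this establishes part (i). For two distinct conjugate pairs the two factors $(\gamma-\bar\gamma)^2$ and $(\delta-\bar\delta)^2$ are each negative and all mixed terms are positive, so their product yields $\Delta_0>0$. Hence $\Delta_0>0$ occurs in exactly the two cases of (ii) and (iii), which are mutually exclusive and exhaustive, and the task reduces to separating them.

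Within the regime $\Delta_0>0$ I would exploit that $g$ is depressed, so $\sum_i\rho_i=0$ and $p=-\tfrac12\sum_i\rho_i^2$. If all roots are real and distinct then $p<0$, and the AM--GM inequality gives $\big(\sum_i\rho_i^2\big)^2=4p^2\ge 16\,\rho_1\rho_2\rho_3\rho_4=16r$, i.e. $p^2\ge 4r$, strict for distinct roots, so $4r-p^2<0$; thus $P<0$ and $D<0$. For two distinct conjugate pairs I would write $g=(y^2-2uy+u^2+v^2)(y^2+2uy+u^2+t^2)$, whence $p=v^2+t^2-2u^2$ and $4r-p^2=8u^2(v^2+t^2)-(v^2-t^2)^2$; the bound $(v^2-t^2)^2\le(v^2+t^2)^2$ then shows that $4r-p^2\le0$ forces $8u^2\le v^2+t^2$, hence $2u^2\le\tfrac14(v^2+t^2)$ and $p>0$. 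Consequently $D\le0$ implies $P>0$ in the conjugate-pair case, so there one always has $P>0$ or $D>0$. Since ``$P<0$ and $D<0$'' and ``$P>0$ or $D>0$'' are logical complements and the two patterns under $\Delta_0>0$ are exhaustive, the forward implications just proved upgrade to the equivalences (ii) and (iii).

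The main obstacle is the bookkeeping behind the clean identity $D=16a^4(4r-p^2)$: the given six-line expression for $D$ must be matched against the coefficient $r$ of the depressed quartic, and the terms in $b$ and $c$ must be checked to cancel exactly. Once this identity is secured, the remaining work is the two inequalities above, whose only delicate point is verifying strictness for \emph{distinct} roots so that the boundary values $P=0$ or $D=0$ do not contaminate the equivalences; the implication $D\le0\Rightarrow P>0$ in the conjugate-pair case is precisely what closes that gap.
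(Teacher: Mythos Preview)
The paper does not prove this lemma at all: it is introduced as an ``auxiliary well-known Lemma'' with a reference to Rees (1922) and then invoked directly in the proof of the proposition on the eigenvalues of $\mathbf{Q}(k)$. Your write-up, by contrast, supplies a complete self-contained argument, and it is correct. The reduction to the depressed monic quartic and the identities $P=8a^{2}p$, $\Delta=a^{6}\Delta_{0}$ are standard; the key bookkeeping identity $D=16a^{4}(4r-p^{2})$ also checks out (from $r=(-3b^{4}+16ab^{2}c-16a^{2}bd+64a^{3}e)/(64a^{4})$ and $p^{2}=(8ac-3b^{2})^{2}/(64a^{4})$ the $b$- and $c$-terms cancel exactly as you anticipate). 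Your sign analysis of $\Delta_{0}$ over the three distinct-root patterns is correct, and the AM--GM bound $p^{2}>4r$ in the all-real case together with the implication $D\le 0\Rightarrow P>0$ in the two-conjugate-pairs case cleanly separates (ii) from (iii).

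One small verbal slip worth fixing: the conditions ``$P<0$ and $D<0$'' and ``$P>0$ or $D>0$'' are \emph{not} logical complements, since the boundary cases $P=0$ or $D=0$ belong to neither. This does not damage the argument, because you in fact prove strict inequalities on both sides: under $\Delta>0$, the four-real pattern forces $P<0$ and $D<0$ strictly (equality in AM--GM would require all $\rho_i^{2}$ equal, impossible for four distinct reals), while in the two-pair pattern $(v^{2}-t^{2})^{2}<(v^{2}+t^{2})^{2}$ is strict since $v,t>0$, so $D\le 0$ yields $P>0$ strictly. As the two root patterns exhaust the regime $\Delta>0$ and the two sign conditions are mutually exclusive, the biconditionals follow. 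Just replace ``logical complements'' with ``mutually exclusive'' and the exposition is clean.
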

Using the above lemma, we next provide a proof for Proposition \ref{prop: eigenvalues}.
\begin{proof}[Proof of Proposition \ref{prop: eigenvalues}]
We recall that  
\begin{equation*}
    \mathbf{Q}(k)=\begin{pmatrix}
0 & 1 & 0 & 0 \\
-\omega_1^2-k & -\gamma_1 & k & 0 \\
0 & 0 & 0 & 1 \\
k & 0 & -\omega_2^2-k & -\gamma_2 
\end{pmatrix}.
\end{equation*}
The characteristic polynomial of $\mathbf{Q}(k)$ is given by
\begin{align}
\mathcal{Q}_k(\lambda)&=\lambda^4+(\gamma_1+\gamma_2)\lambda^3+(\omega_1^2+\omega_2^2+\gamma_1\gamma_2+2k)\lambda^2+(\gamma_1 \omega_2^2+\gamma_2\omega_1^2+(\gamma_1+\gamma_2)k)x\notag
\\&\qquad+\omega_1^2\omega_2^2+(\omega_1^2+\omega_2^2)k\notag
\\&=(\lambda^2+\gamma_1 \lambda+\omega_1^2)(\lambda^2+\gamma_2 \lambda+\omega_2^2)+ k\Big[(\lambda^2+\gamma_1 \lambda+\omega_1^2))+(\lambda^2+\gamma_2 \lambda+\omega_2^2)\Big]\notag
\\&= \mathcal{Q}_0(\lambda)+k\Big[(\lambda^2+\gamma_1 \lambda+\omega_1^2))+(\lambda^2+\gamma_2 \lambda+\omega_2^2)\Big],
\end{align}
where $\mathcal{Q}_0$ is the characteristic polynomial of $\mathbf{Q}_0$, \AM{that is}
$$
\mathcal{Q}_0(\lambda)=(\lambda^2+\gamma_1 \lambda+\omega_1^2)(\lambda^2+\gamma_2 \lambda+\omega_2^2).
$$
Let $\Delta_k, P_k, D_k$ be defined as in \eqref{eq: DeltaPD} for $\mathcal{Q}_k$. Then we have
\begin{equation}
\Delta_k=\Delta_0+ k \mathcal{P}_4(k),\quad P_k=P_0+16 k, \quad D_k=-64 k^2+16(\gamma_1-\gamma_2)^2 k+D_0,
\end{equation}
where $\mathcal{P}_4(k)$ is a polynomial of degree 4 of $k$ whose leading coefficient is given by
$$
p_4=256(\omega_1^2+\omega_2^2)-32(\gamma_1+\gamma_2)^2.
$$
Since $\gamma_i^2-4\omega_i^2>0$ for $i=1,2$, $\mathcal{Q}_0$ has 4 distinct real roots. Therefore
$$
\Delta_0>0, \quad P_0<0, \quad D_0<0.
$$
Since $\Delta_k$ and $D_k$ are continuous functions \AM{in the variable} $k$ and $\lim\limits_{k\rightarrow 0^+} \Delta_k=\Delta_0<0$ and $\lim\limits_{k\rightarrow 0^+} D_k=D_0<0$, for sufficiently small $k\leq k_1$, $\Delta_k$ and $D_k$ are both negative. Hence, for $k<\min\{k_1,-\frac{P_0}{16}\}$, we have
$$
\Delta_k<0, \quad P_k<0, \quad D_k<0.
$$
Therefore, $\mathcal{Q}_k$ has four distinct real roots.

For $k>-\frac{P_0}{16}$, then $P_k>0$. Since $\Delta_k$ is a polynomial with the leading coefficient $p_4=32[8(\omega_1^2+\omega_2^2)-(\gamma_1+\gamma_2)^2]$ then for sufficiently large $k$, $\Delta_k$ is positive if $8(\omega_1^2+\omega_2^2)> (\gamma_1+\gamma_2)^2$ and is negative if $8(\omega_1^2+\omega_2^2) < (\gamma_1+\gamma_2)^2$. Therefore for sufficiently large $k$, $\mathcal{Q}_k$ has two distinct real roots and two complex conjugate non-real roots if $8(\omega_1^2+\omega_2^2) < (\gamma_1+\gamma_2)^2$, and has two pairs of non-real complex conjugate roots if $8(\omega_1^2+\omega_2^2) > (\gamma_1+\gamma_2)^2$.

In the case of identical beads, we have
$$
\mathcal{Q}_k(\lambda)=(\lambda^2+\gamma \lambda+ \omega^2)(\lambda^2+\gamma \lambda+ \omega^2+2k)
$$ 
Since $\gamma^2-4\omega^2>0$, the quadratic $\lambda^2+\gamma \lambda+ \omega^2$ has two distinct roots. The second quadratic $\lambda^2+\gamma \lambda+\omega^2+2k=0$ has two distinct roots if $\gamma^2-4(\omega^2+2k)>0$, that is $k<k_c$, and has a pair of complex conjugate roots if $k>k_c$. 

This completes the proof of the lemma.
\end{proof}
We consider the following concrete example used in the previous sections.
\begin{example}
\label{ex: phase transition}
In particular, for the parameters' values in Figure \ref{fig:fig1} 
$$
\omega_1=0.3, \quad \omega_2=0.4, \gamma_1=1.0, \quad \gamma_2=1.2
$$
Then
\begin{align*}
& \mathcal{Q}_k(\lambda)=0.0144+0.25 k+(0.268+2.2 k) \lambda+(1.45 +2k) \lambda^2+ 2.2 \lambda^3+\lambda^4,
\\& \Delta_k=0.000015488-0.00039952 k+0.0384776 k^2-0.568736 k^3+9.7808 k^4-90.88 k^5,
\\& P_k=16k-2.92, \quad D_k=-64k^2+0.64 k-0.1408.
\end{align*}
The equation $\Delta_k=0$ has only one positive real root $k_c\approx 0.0876$. \AM{It holds:} 
\begin{itemize}
    \item[(i)] \AM{If} $k<k_c$, $\Delta_k>0, P_k<0, D_k<0$, \AM{then} $\mathcal{Q}_k$ has 4 distinct real roots.
    \item[(ii)] \AM{If} $k=k_c$, $\Delta_k=0$, $P_k<0, D_k<0$, \AM{then} $\mathcal{Q}_k$ has a real double root and two real simple roots (i.e., 3 distinct real roots).
    \item[(iii)] \AM{If} $k>k_c$, $\Delta<0$, \AM{then} $\mathcal{Q}_k$ has has two distinct real roots and two complex conjugate non-real roots.
\end{itemize}
\end{example}
\section*{Acknowledgment} MC thanks Lamberto Rondoni (Turin Polytechnic, Italy) for useful discussions. Research of MHD was supported by EPSRC Grants EP/W008041/1 and  EP/V038516/1.
\bibliographystyle{plain}
\bibliography{biblio}

\end{document}